\newfont{\nset}{msbm10}
\newcommand{\figwidth}{0.9\textwidth}
\newtheorem{theo}{Theorem}[section]
\newtheorem{theorem}[theo]{Theorem}
\newtheorem{lemma}[theo]{Lemma}
\journal{Theoretical Computer Science}
\begin{document}

\begin{frontmatter}

\title{Domination number and minimum dominating sets in pseudofractal scale-free web and Sierpi\'nski graph}

\author[lable1,label2]{Liren Shan}

\author[lable1,label2]{Huan Li}

\author[lable1,label2]{Zhongzhi Zhang}
\ead{zhangzz@fudan.edu.cn}

\address[lable1]{School of Computer Science, Fudan
University, Shanghai 200433, China}
\address[label2]{Shanghai Key Laboratory of Intelligent Information
Processing, Fudan University, Shanghai 200433, China}

\begin{abstract}
The minimum dominating set (MDS) problem is a fundamental subject of theoretical computer science, and has found vast applications in different areas, including sensor networks, protein interaction networks, and structural controllability. However, the determination of the size of a MDS and the number of all MDSs in a general network is NP-hard, and it thus makes sense to seek particular graphs for which the MDS problem can be solved analytically. In this paper, we study the MDS problem in the pseudofractal scale-free web and the Sierpi\'nski graph, which have the same number of vertices and edges. For both networks, we determine explicitly the domination number, as well as the number of distinct MDSs. We show that the pseudofractal scale-free web has a unique MDS, and its domination number is only half of that for the Sierpi\'nski graph, which has many MDSs. We argue that the scale-free topology is responsible for the difference of the size and number of MDSs between the two studied graphs, which in turn indicates that power-law degree distribution plays an important role in the MDS problem and its applications in scale-free networks.
\end{abstract}

\begin{keyword}
Minimum dominating set\sep Domination number\sep Scale-free network\sep Sierpi\'nski graph\sep Complex network
\end{keyword}

\end{frontmatter}

\section{Introduction}

A dominating set of a graph $\mathcal{G}$ with vertex set $\mathcal{V}$ is a subset $\mathcal{D}$ of $\mathcal{V}$, such that each vertex in $\mathcal{V}\setminus \mathcal{D}$ is adjacent to at least one vertex belonging to $\mathcal{D}$. We call $\mathcal{D}$ a minimum dominating set (MDS) if it has the smallest cardinality, and the cardinality of a MDS is called the domination number of graph $\mathcal{G}$. The MDS problem has numerous practical applications in different fields~\cite{HaHeSl98}, particularly in networked systems, such as routing on ad hoc wireless networks~\cite{WuLi01,Wu02}, multi-document summarization in sentence graphs~\cite{ShLi10}, and controllability in protein interaction networks~\cite{Wu14}. Recently, from the angle of MDS, structural controllability of complex networks has been addressed~\cite{NaAk12,NaAk13}, providing an alternative new viewpoint of the control problem for large-scale networks~\cite{LiSlBa11,NeVi12}.

Despite the vast applications, solving the MDS problem of a graph is a challenge, because finding a MDS of a graph is NP-hard~\cite{HaHeSl98}. Over the past years, the MDS problem has attracted considerable attention from theoretical computer science~\cite{FoGrPySt08, HeIs12,dadedeMa14, GaHaK15, CoLeLi15}, discrete and combinatorial mathematics~\cite{MaTa96, KaLiMaNo14, HoKaNa10}, as well as statistical physics~\cite{ZhHaZh15}, and continues to be an active object of research~\cite{LiZhShXu16, GoHeKr16}. Extensive empirical research~\cite{Ne03} uncovered that most real networks exhibit the prominent scale-free behavior~\cite{BaAl99}, with the  degree of their vertices following a power-law distribution $P(k) \sim k^{-\gamma}$. Although an increasing number of studies have been focused on MDS problem, related works about MDS problem in scale-free networks are very rare~\cite{NaAk12, MoDeCzSzSzKo14}. In particular, exact results for domination number and the number of minimum dominating sets in a scale-free network are still lacking.

Due to the ubiquity of scale-free phenomenon in realistic networks, unveiling the behavior of minimum dominating sets with respect to power-law degree distribution is important for better understanding the applications of MDS problem in real-life scale-free networks. On the other hand, determining the domination number and enumerating all minimum dominating sets in a generic network are formidable~\cite{HaHeSl98}, it is thus of great interest to find specific scale-free networks for which the MDS problem can be exactly solved~\cite{LoPl86}.

In this paper, we focus on the domination number and the number of minimum dominating sets in a scale-free network, called pseudofractal scale-free web~\cite{DoGoMe02,ZhQiZhXiGu09} and the Sierpi\'nski graph with the same number of vertices and edges. For both networks, we determine the exact domination number and the number of all different minimum dominating sets. The domination number of the pseudofractal scale-free web is only half of that of the Sierpi\'nski graph. In addition, in the pseudofractal scale-free web, there is a unique minimum dominating set, while in Sierpi\'nski graph the number of all different minimum dominating sets grows exponentially with the number of vertices in the graph. We show the root of the difference of minimum dominating sets between studied networks rests with their architecture dissimilarity, with the pseudofractal scale-free web being heterogeneous while the Sierpi\'nski graph being homogeneous.

\section{Domination number and minimum dominating sets in pseudofractal scale-free web}

In this section, we determine the domination number in the pseudofractal scale-free web, and show that its minimum dominating set is unique.

\subsection{Network construction and properties}

The pseudofractal scale-free web~\cite{DoGoMe02,ZhQiZhXiGu09} considered here is constructed using an iterative approach. Let  $\mathcal{G}_n$, $n\geq 1$, denote the $n$-generation network. Then the scale-free network is generated in the following way. When $n=1$, $\mathcal{G}_1$ is a complete graph of 3 vertices. For $n > 1$, each subsequent generation is obtained from the current one by creating a new vertex linked to both endvertices of every existent edge. Figure~\ref{network} shows the networks for the first several generations. By construction,  the number of edges in network $\mathcal{G}_n$ is $E_n=3^{n+1}$.

\begin{figure}
\begin{center}
\includegraphics{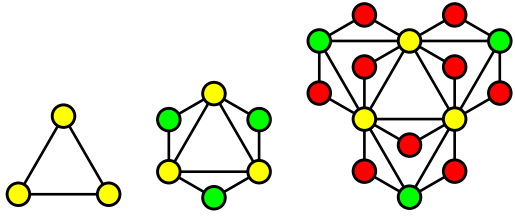} 
\end{center}
\caption[kurzform]{The first three generations of the
scale-free network.} \label{network}
\end{figure}

The resultant network exhibits the prominent properties observed in various real-life systems. First, it is scale-free, with the degree distribution of its vertices obeying a power law form $P(k)\sim k^{1+\ln 3/\ln 2}$~\cite{DoGoMe02}. In addition, it displays the small-world effect, with its average distance increasing logarithmically with the number of vertices~\cite{ZhZhCh07} and the average clustering coefficient converging to a high constant $\frac{4}{5}$.

Another interesting property of the scale-free network is its self-similarity, which is also pervasive in realistic networks~\cite {SoHaMa05}. The initial three vertices of $\mathcal{G}_1$ have the largest degree, which we call hub vertices. For network $\mathcal{G}_n$, we denote its three hub vertices by $A_{n}$, $B_{n}$, and $C_{n}$, respectively. The self-similarity of the network can be seen from its alternative construction method~\cite{ZhZhCh07,ZhLiWuZh10}. Given the $n$th generation network $\mathcal{G}_{n}$, $\mathcal{G}_{n+1}$ can be obtained by merging three replicas of $\mathcal{G}_{n}$ at their hub vertices, see Fig.~\ref{mergeF}. Let $\mathcal{G}_{n}^{\theta}$, $\theta=1,2,3$, be three copies of $\mathcal{G}_{n}$, the hub vertices of which are represented by $A_{n}^{\theta}$, $B_{n}^{\theta}$,
and $C_{n}^{\theta}$, respectively. Then, $\mathcal{G}_{n+1}$ can be obtained by joining $\mathcal{G}_{n}^{\theta}$, with $A_{n}^{1}$
(resp. $C_{n}^{1}$, $A_{n}^{2}$) and $B_{n}^{3}$ (resp. $B_{n}^{2}$,
$C_{n}^{3}$) being identified as the hub vertex $A_{n+1}$ (resp.
$B_{n+1}$, $C_{n+1}$) in $\mathcal{G}_{n+1}$.

\begin{figure}
\begin{center}
\includegraphics[width=.70\linewidth]
{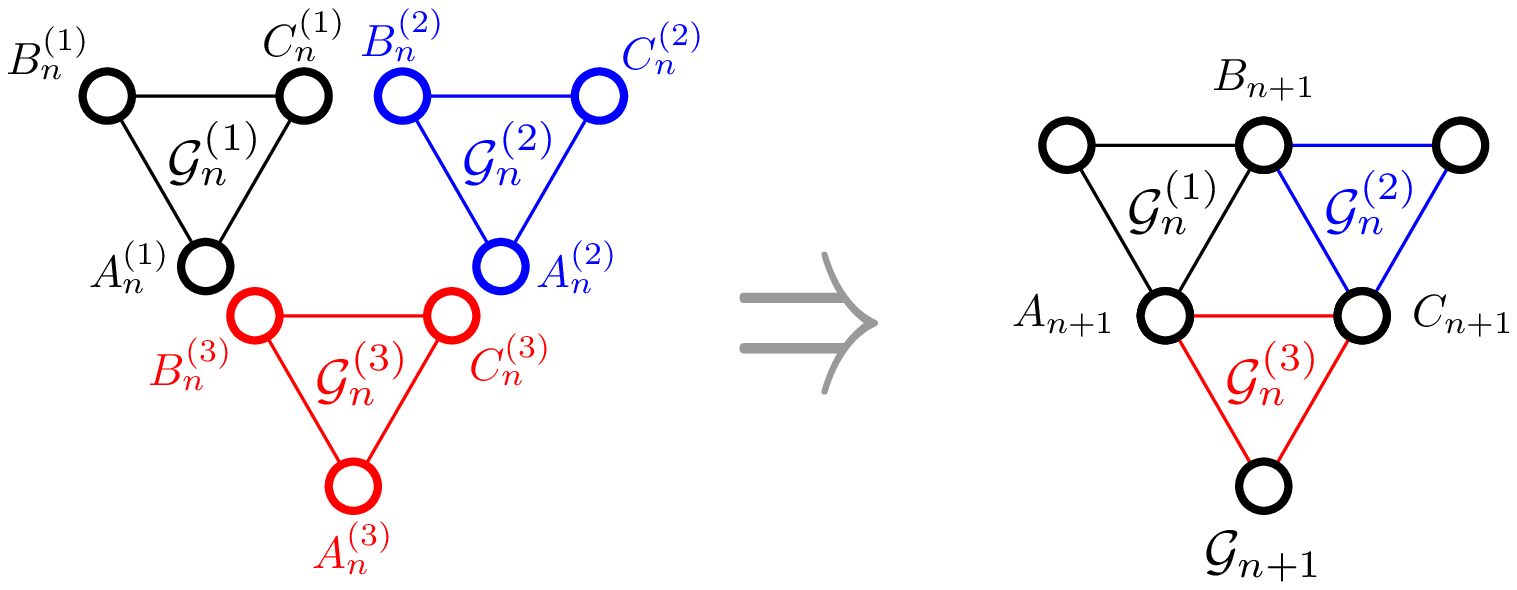}
\caption{Another construction of the scale-free network.} \label{mergeF}
\end{center}
\end{figure}

Let $N_n$ be the number of vertices in $G_{n}$. According to the second construction of the network, $N_n$ obeys the relation $N_{n+1}=3N_n-3$, which together with the initial value $N_0=3$ yields $N_n=(3^{n}+3)/2$.


\subsection{Domination number and minimum dominating set}


\begin{figure}
\begin{center}
\includegraphics[width=0.70\linewidth]
{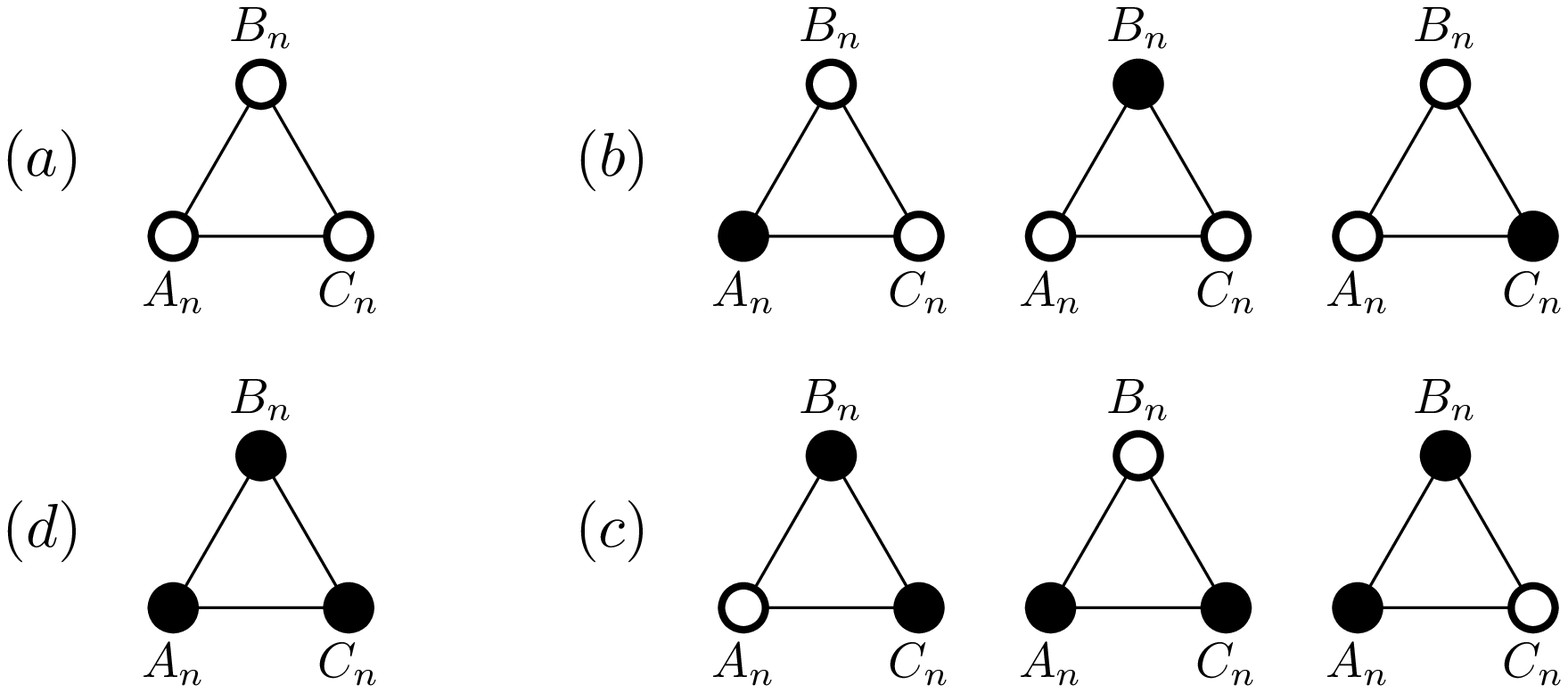}
\end{center}
\caption[kurzform]{\label{Demo01} Illustrations for the definitions of
$\Theta_n^k$, $k = 0,1,2,3$, only showing the three hub vertices. (a), (b), (c) and (d) correspond to a dominating set belonging to $\Theta_n^0$, $\Theta_n^1$, $\Theta_n^2$, and $\Theta_n^3$, respectively. Every filled circle denotes a hub  in the dominating set, while each empty circle represents a hub not in the dominating set.}
\end{figure}

Let $\gamma_n$ denote the domination number of network $\mathcal{G}_n$. In order to determine $\gamma_n$, we define some intermediate quantities. As shown above, there are three hub vertices in $\mathcal{G}_n$, then all dominating sets of $\mathcal{G}_n$ can be sorted into four classes: $\Omega_n^0$, $\Omega_n^1$, $\Omega_n^2$, and $\Omega_n^3$, where $\Omega_n^k$ ($k = 0,1,2,3$) represents those dominating sets, each of which includes exactly $k$ hub vertices. Let $\Theta_n^k$, $k = 0,1,2,3$, be the subset of $\Omega_n^k$, which has the smallest cardinality (number of vertices), denoted by $\gamma_n^k$. Figure~\ref{Demo01} illustrates the definitions for $\Theta_n^k$, $k = 0,1,2,3$.  By definition, we have the following lemma.
\begin{lemma}\label{Dom01}
The domination number of network $\mathcal{G}_n$, $n\geq1$, is $\gamma_n = \min\{\gamma_n^0,\gamma_n^1,\gamma_n^2,\gamma_n^3\}$.
\end{lemma}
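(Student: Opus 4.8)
The plan is to observe that this statement is essentially a bookkeeping consequence of the definitions, and the only thing that requires a line of verification is that the four families $\Omega_n^0,\Omega_n^1,\Omega_n^2,\Omega_n^3$ together exhaust all dominating sets of $\mathcal{G}_n$. Once this partition is established, the identity for $\gamma_n$ is immediate.

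First I would note that $\mathcal{G}_n$ has exactly three hub vertices $A_n,B_n,C_n$, so for any dominating set $\mathcal{D}$ of $\mathcal{G}_n$ the integer $k := |\mathcal{D}\cap\{A_n,B_n,C_n\}|$ lies in $\{0,1,2,3\}$. Hence $\mathcal{D}\in\Omega_n^k$ for precisely that value of $k$, which shows that $\bigcup_{k=0}^{3}\Omega_n^k$ is the collection of all dominating sets of $\mathcal{G}_n$ and that this union is disjoint.

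Next, since $\gamma_n$ is by definition the smallest cardinality of any dominating set of $\mathcal{G}_n$, and every dominating set belongs to some $\Omega_n^k$, we obtain
\[
\gamma_n \;=\; \min\{\,|\mathcal{D}|: \mathcal{D}\ \text{dominating}\,\}
\;=\; \min_{0\le k\le 3}\ \min\{\,|\mathcal{D}|: \mathcal{D}\in\Omega_n^k\,\}
\;=\; \min_{0\le k\le 3}\gamma_n^k,
\]
where the last equality uses that $\Theta_n^k$ realizes the minimum cardinality $\gamma_n^k$ within $\Omega_n^k$. One should also remark that each $\Omega_n^k$ is nonempty for $n\ge 1$ (for instance, the full vertex set with a suitable subset of hubs deleted is dominating and sits in the required class), so the inner minima are well defined; alternatively one may use the convention $\gamma_n^k=\infty$ when $\Omega_n^k=\emptyset$, which leaves the outer minimum unchanged.

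There is no genuine obstacle here: the lemma merely records that partitioning dominating sets according to how many of the three hubs they contain turns the computation of $\gamma_n$ into four separate subproblems. The substantive work — deriving recursions for each $\gamma_n^k$ from the self-similar decomposition of $\mathcal{G}_{n+1}$ into three copies of $\mathcal{G}_n$ glued at hubs (Fig.~\ref{mergeF}) — is carried out in the lemmas that follow.
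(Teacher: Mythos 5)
Your argument is correct and matches the paper, which states this lemma as an immediate consequence of the definitions: the four classes $\Omega_n^k$ partition all dominating sets by the number of hubs they contain, so the global minimum equals the minimum of the four class minima. Your additional remark about non-emptiness (or the $\infty$ convention) is a harmless bit of extra care the paper omits.
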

After reducing the problem of determining $\gamma_n$ to computing $\gamma_n^k$, $k = 0,1,2,3$, we next evaluate $\gamma_n^k$ by using the self-similar property of the network. 
\begin{lemma}
For two successive generation networks $\mathcal{G}_n$ and $\mathcal{G}_{n+1}$, $n\geq1$,
\begin{equation}\label{Dom02}
\gamma_{n+1}^0 = \min\{3\gamma_n^0,2\gamma_n^0+\gamma_n^1,2\gamma_n^1+\gamma_n^0,3\gamma_n^1\},
\end{equation}
\begin{align}\label{Dom03}
\gamma_{n+1}^1 &= \min\{2\gamma_n^1+\gamma_n^0-1,3\gamma_n^1-1,\gamma_n^0+\gamma_n^1+\gamma_n^2-1,\nonumber\\
&\quad \gamma_n^2+2\gamma_n^1-1,2\gamma_n^2+\gamma_n^0-1,2\gamma_n^2+\gamma_n^1-1\},
\end{align}
\begin{align}\label{Dom04}
\gamma_{n+1}^2 &= \min\{2\gamma_n^1+\gamma_n^2-2,\gamma_n^3+2\gamma_n^1-2,2\gamma_n^2+\gamma_n^1-2,\nonumber\\
&\quad \gamma_n^3+\gamma_n^2+\gamma_n^1-2,3\gamma_n^2-2,\gamma_n^3+2\gamma_n^2-2\},
\end{align}
\begin{equation}\label{Dom05}
\gamma_{n+1}^3 = \min\{3\gamma_n^2-3,\gamma_n^3+2\gamma_n^2-3,2\gamma_n^3+\gamma_n^2-3,3\gamma_n^3-3\}.
\end{equation}
\end{lemma}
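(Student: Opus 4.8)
The plan is to exploit the self-similar decomposition of $\mathcal{G}_{n+1}$ into three copies $\mathcal{G}_n^1,\mathcal{G}_n^2,\mathcal{G}_n^3$ of $\mathcal{G}_n$, glued pairwise at single hubs in the triangle pattern described above: each copy then carries two \emph{outer} hubs---the common vertices, which become $A_{n+1},B_{n+1},C_{n+1}$, each lying in exactly two copies---and one \emph{inner} hub lying in that copy alone; moreover $\mathcal{G}_{n+1}$ has no edge joining two distinct copies except at a shared outer hub, and within each copy the three hubs are pairwise adjacent. I would first record two elementary facts about $\mathcal{G}_n$: (i) the automorphisms of $\mathcal{G}_n$ realize every permutation of its three hubs, so a dominating set of $\mathcal{G}_n$ of size $\gamma_n^j$ may be taken to contain any prescribed $j$ of the hubs; and (ii) for $n\ge 2$ and each pair of hubs of $\mathcal{G}_n$ there is a degree-$2$ vertex adjacent to exactly those two hubs (the one created on their joining edge at the last iteration), which therefore dominates both of them.

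For the upper bound I would use that whenever $D_\theta$ is a dominating set of $\mathcal{G}_n^\theta$ for $\theta=1,2,3$, the union $D_1\cup D_2\cup D_3$ is a dominating set of $\mathcal{G}_{n+1}$, since every vertex of $\mathcal{G}_{n+1}$ lies in some copy and $D_\theta$ dominates it there. By fact~(i) I can realize, for any prescribed hub-patterns, such $D_\theta$ of sizes $\gamma_n^{j_1},\gamma_n^{j_2},\gamma_n^{j_3}$; placing each chosen outer hub in both copies that contain it, the union contains exactly $k$ outer hubs and has size $\gamma_n^{j_1}+\gamma_n^{j_2}+\gamma_n^{j_3}-k$. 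Enumerating, for each $k$, the triples $(j_1,j_2,j_3)$ compatible with exactly $k$ outer hubs present---for example, for $k=2$ with $A_{n+1},B_{n+1}$ present, the copy containing both of them has $j\ge 2$ while each of the other two copies contains its single relevant outer hub, so $j\in\{1,2\}$---shows $\gamma_{n+1}^k$ is at most each term appearing in the minimum in the corresponding one of equations~\eqref{Dom02}--\eqref{Dom05}.

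For the lower bound, let $\mathcal{D}$ be a minimum dominating set of $\mathcal{G}_{n+1}$ with exactly $k$ outer hubs and put $\mathcal{D}_\theta=\mathcal{D}\cap V(\mathcal{G}_n^\theta)$. The key claim is that each $\mathcal{D}_\theta$ is a dominating set of $\mathcal{G}_n^\theta$. Every vertex of a copy other than its two outer hubs has all its neighbours inside the copy and so is dominated by $\mathcal{D}_\theta$; for the two outer hubs of copy $\theta$ I split into cases. If $\mathcal{D}$ contains some hub of copy $\theta$, that hub dominates the other two hubs of the copy, hence both outer hubs. If $\mathcal{D}$ contains no hub of copy $\theta$, then by fact~(ii) the degree-$2$ vertex joined to the two outer hubs of copy $\theta$ has both neighbours outside $\mathcal{D}$, so it lies in $\mathcal{D}_\theta$ and dominates both. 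Granting the claim, $|\mathcal{D}_\theta|\ge\gamma_n^{j_\theta}$ with $j_\theta$ the number of hubs of copy $\theta$ in $\mathcal{D}$; since a vertex lies in two copies precisely when it is an outer hub, $\gamma_{n+1}^k=|\mathcal{D}|=|\mathcal{D}_1|+|\mathcal{D}_2|+|\mathcal{D}_3|-k\ge\gamma_n^{j_1}+\gamma_n^{j_2}+\gamma_n^{j_3}-k$, and the constraints on $(j_1,j_2,j_3)$ forced by ``$\mathcal{D}$ has exactly $k$ outer hubs'' are the same ones used in the upper bound, so the right-hand side is at least the minimum in the corresponding equation. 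This gives ``$\ge$'', completing the four identities. (At $n=1$ one reads $\gamma_1^0=+\infty$ since $K_3$ has no non-hub vertex; then the ``no hub in a copy'' case cannot occur, and the identities hold directly.)

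The step I expect to be the main obstacle is the key claim in the lower bound: that restricting a minimum dominating set of $\mathcal{G}_{n+1}$ to a copy leaves a dominating set of that copy. This is false for a general dominating set without the structural input of fact~(ii), and the case where a copy contains none of its three hubs is precisely where that input is needed; handling this case split correctly is also what explains why the recursions involve only $\gamma_n^0,\dots,\gamma_n^3$ and no auxiliary ``partial domination'' quantities.
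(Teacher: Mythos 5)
Your proposal is correct and follows essentially the same route as the paper: decompose $\mathcal{G}_{n+1}$ into the three copies of $\mathcal{G}_n$ glued at hubs and enumerate, for each value of $k$, the admissible triples $(j_1,j_2,j_3)$ of hub-counts, which yields exactly the terms in each minimum. The paper merely asserts this ``graphically'' via the configuration figures, whereas you supply the two ingredients that make it rigorous --- the hub-transitivity of $\mathrm{Aut}(\mathcal{G}_n)$ for the upper bound, and the degree-$2$ vertex on each hub--hub edge forcing the restriction $\mathcal{D}\cap V(\mathcal{G}_n^{\theta})$ to dominate its copy for the lower bound --- so no gap remains.
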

\begin{proof}
By definition, $\gamma_{n+1}^k$, $k = 0,1,2,3$, is the cardinality of $\Theta_{n+1}^k$. Below, we will show that the four sets $\Theta_{n+1}^k$, $k = 0,1,2,3$, constitute a complete set, since each one can be constructed iteratively from $\Theta_n^0$, $\Theta_n^1$, $\Theta_n^2$, and $\Theta_n^3$. Then, $\gamma_{n+1}^k$, $k = 0,1,2,3$, can be obtained from $\gamma_n^0$, $\gamma_n^1$, $\gamma_n^2$, and $\gamma_n^3$.

We first consider Eq.~\eqref{Dom02}, which can be proved graphically.

Note that $\mathcal{G}_{n+1}$ is composed of three copies of $\mathcal{G}_n$, $\mathcal{G}_{n}^{\theta}$ ($\theta=1,2,3$). By definition, for any dominating set $\chi$ belonging to $\Theta_{n+1}^0$, the three hub vertices of $\mathcal{G}_{n+1}$ are not in $\chi$, which means that the corresponding six identified hub vertices of $\mathcal{G}_{n}^{\theta}$ do not belong to $\chi$, see Fig.~\ref{mergeF}. Thus, we can construct $\chi$ from $\Theta_n^0$, $\Theta_n^1$, $\Theta_n^2$, and $\Theta_n^3$ by considering whether the other three hub vertices of $\mathcal{G}_{n}^{\theta}$, $\theta=1,2,3$, are in  $\chi$ or not.  Figure~\ref{Theta0} shows all possible configurations of dominating sets $\Omega_{n+1}^0$ that contains $\Theta_{n+1}^0$ as subsets. From Fig.~\ref{Theta0}, we obtain
\begin{equation*}
\gamma_{n+1}^0 = \min\{3\gamma_n^0,2\gamma_n^0+\gamma_n^1,2\gamma_n^1+\gamma_n^0,3\gamma_n^1\}.
\end{equation*}

For Eqs.~\eqref{Dom03},~\eqref{Dom04}, and~\eqref{Dom05}, they can be proved similarly. In Figs.~\ref{Theta1},~\ref{Theta2}, and~\ref{Theta3}, we provide graphical representations of Eqs.~\eqref{Dom03},~\eqref{Dom04}, and~\eqref{Dom05}, respectively.
\end{proof}




\begin{figure}[htbp]
\centering
\includegraphics[width=0.70\linewidth]{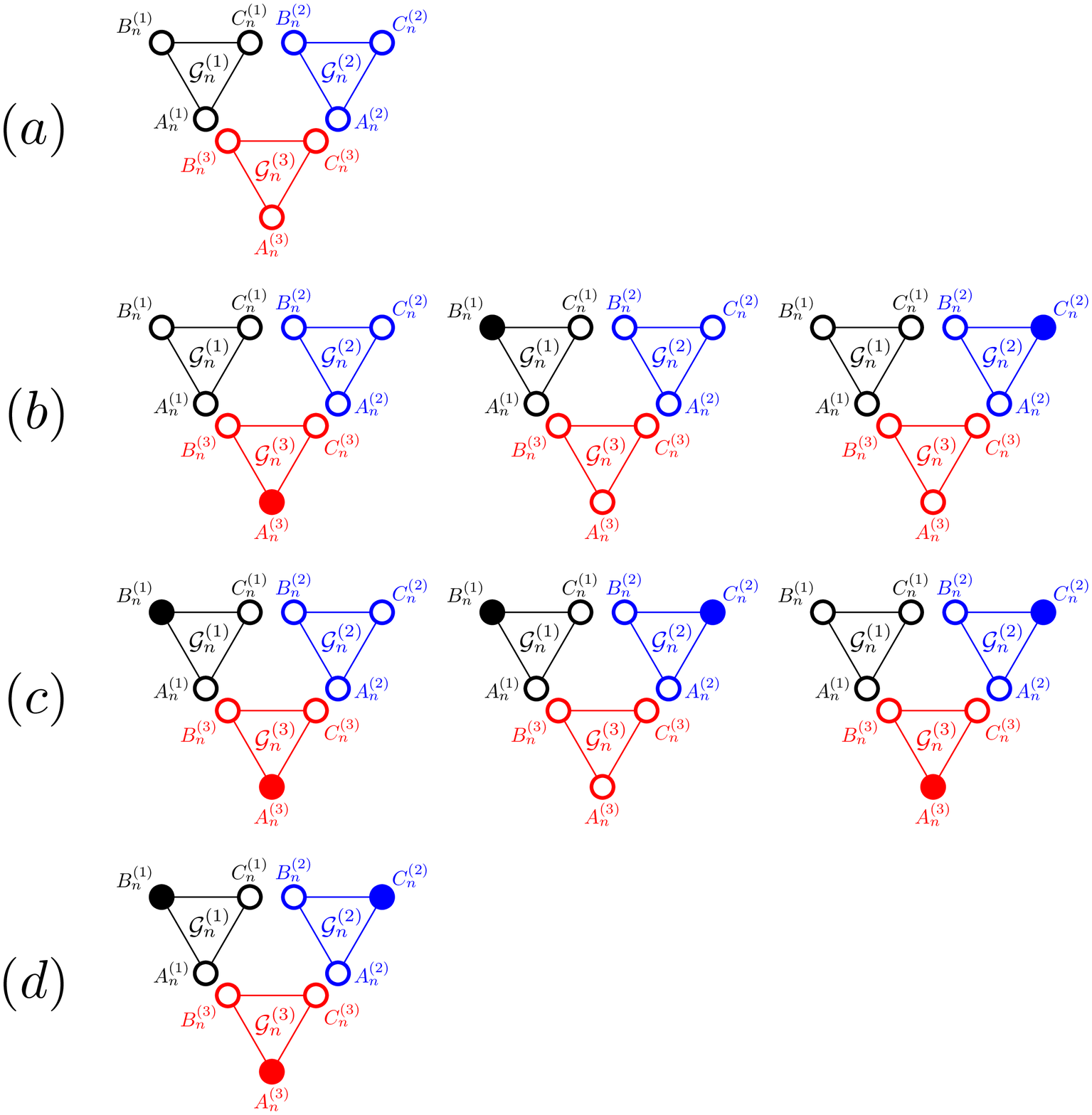}
\caption{\label{Theta0}Illustration of all possible configurations of dominating sets $\Omega_{n+1}^0$ of $\mathcal{G}_{n+1}$, which contain $\Theta_{n+1}^0$. Only the hub vertices of $\mathcal{G}_{n}^{\theta}$, $\theta=1,2,3$, are shown. Solid vertices are in the dominating sets, while open vertices are not.}
\end{figure}

\begin{figure}[htbp]
\centering
\includegraphics[width=1.00\linewidth]{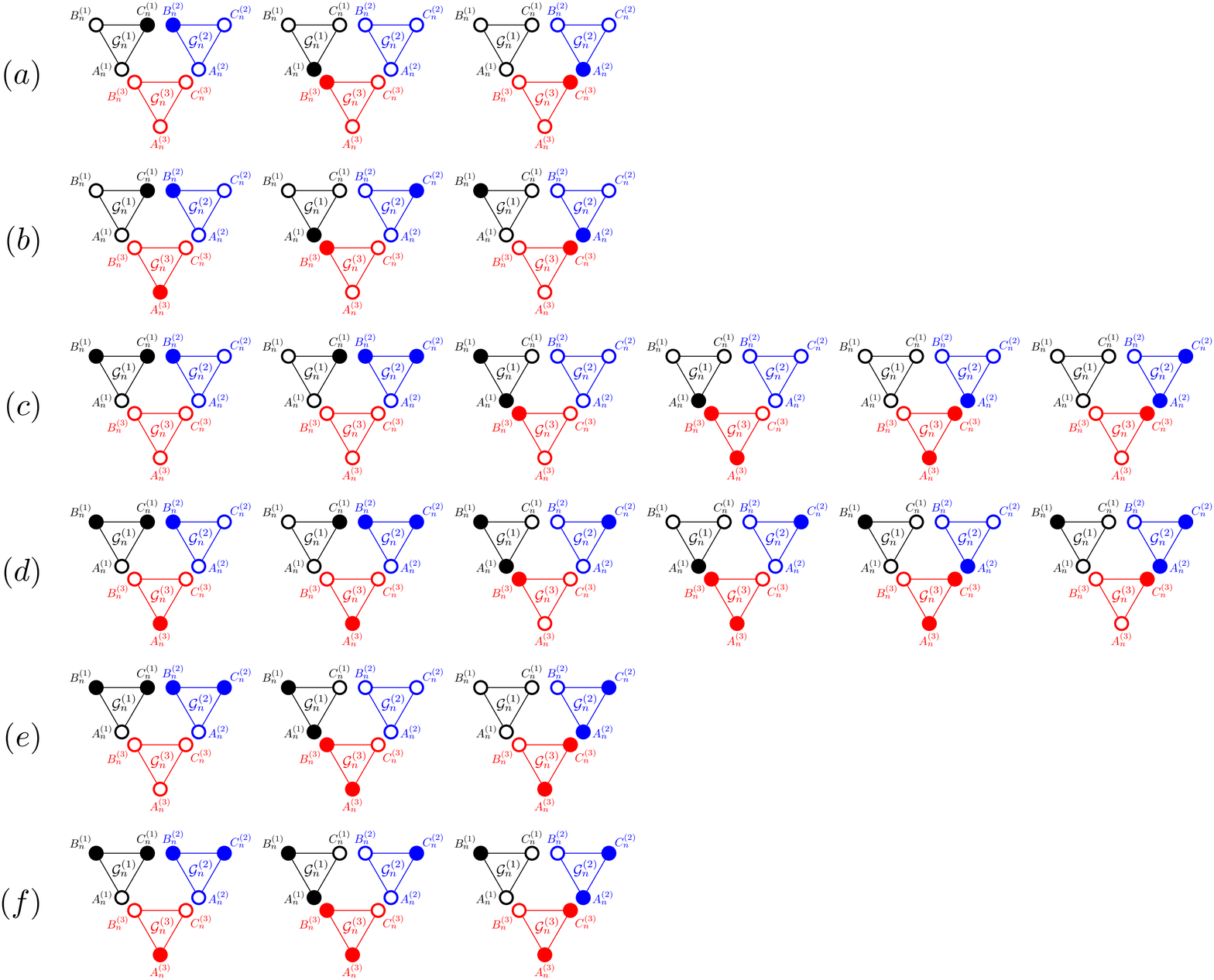}
\caption{\label{Theta1}Illustration of all possible configurations of dominating sets $\Omega_{n+1}^1$ of $\mathcal{G}_{n+1}$, which contain $\Theta_{n+1}^1$.}
\end{figure}

\begin{figure}[htbp]
\centering
\includegraphics[width=1.00\linewidth]{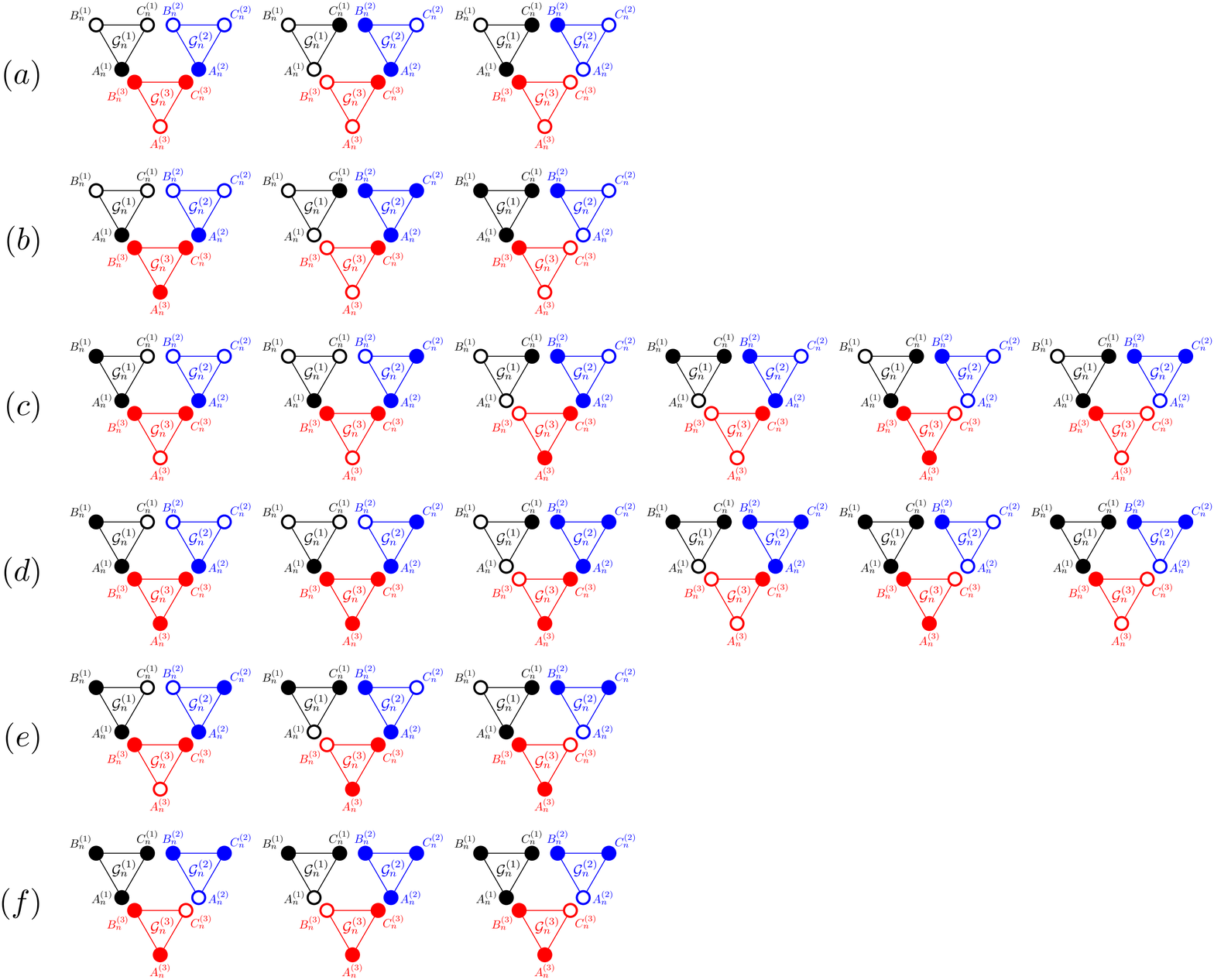}
\caption{\label{Theta2}Illustration of all possible configurations of dominating sets $\Omega_{n+1}^2$ of $\mathcal{G}_{n+1}$, which contain $\Theta_{n+1}^2$.}
\end{figure}

\begin{figure}[htbp]
\centering
\includegraphics[width=0.60\linewidth]{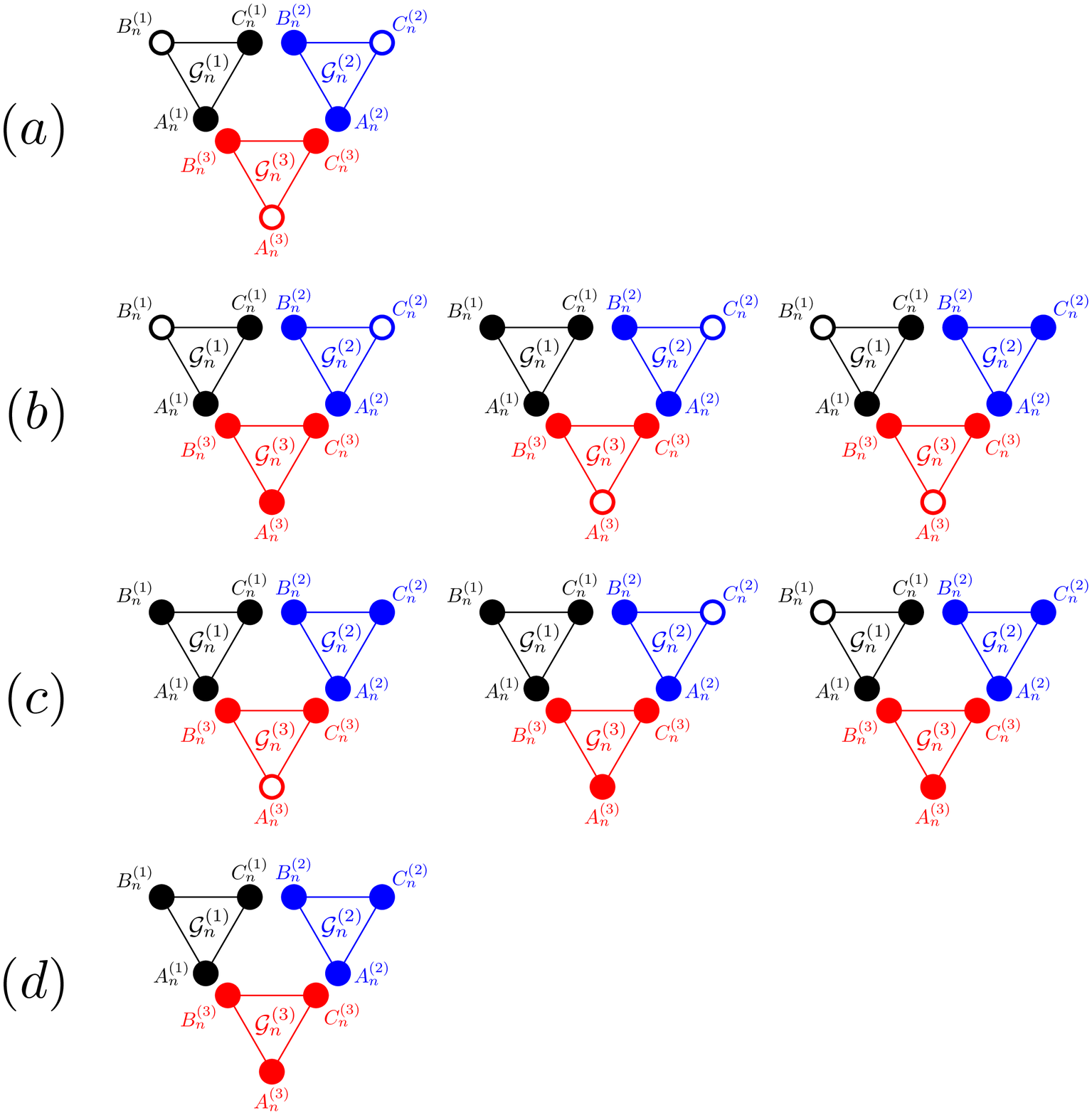}
\caption{\label{Theta3}Illustration of all possible configurations of dominating sets $\Omega_{n+1}^3$ of $\mathcal{G}_{n+1}$, which contain $\Theta_{n+1}^3$.}
\end{figure}

\begin{lemma}\label{Dom06}
For network $\mathcal{G}_n$, $n\geq 3$, $\gamma_n^3 \leq \gamma_n^2 \leq \gamma_n^1 \leq \gamma_n^0$.
\end{lemma}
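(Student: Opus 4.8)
The plan is to prove the chain of inequalities by induction on $n$, with the inductive step driven entirely by the recursions \eqref{Dom02}--\eqref{Dom05}. It is worth noting first why the base case must be taken at $n=3$ rather than $n=1$: in the triangle $\mathcal{G}_1$ one has $\gamma_1^1=1<\gamma_1^2=2<\gamma_1^3=3$ (and no dominating set avoids every hub), and a direct inspection of the six-vertex graph $\mathcal{G}_2$ gives $(\gamma_2^0,\gamma_2^1,\gamma_2^2,\gamma_2^3)=(3,2,2,3)$, so the asserted ordering genuinely fails for $n=1,2$ and first becomes true at $n=3$. For the base case I would feed the generation-$2$ values into \eqref{Dom02}--\eqref{Dom05}, obtaining $(\gamma_3^0,\gamma_3^1,\gamma_3^2,\gamma_3^3)=(6,5,4,3)$, which indeed satisfies $\gamma_3^3\le\gamma_3^2\le\gamma_3^1\le\gamma_3^0$.

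For the inductive step, assume $\gamma_n^3\le\gamma_n^2\le\gamma_n^1\le\gamma_n^0$ for some $n\ge 3$ and deduce the same ordering at level $n+1$. The point is that each recursion writes $\gamma_{n+1}^k$ as the minimum of a short list of sums of the numbers $\gamma_n^0,\dots,\gamma_n^3$, and the lists for consecutive values of $k$ are closely matched: for every term in the list defining $\gamma_{n+1}^{k}$ there is a term in the list defining $\gamma_{n+1}^{k+1}$ that is strictly smaller, so that $\gamma_{n+1}^{k+1}\le\gamma_{n+1}^{k}-1$. For instance, to get $\gamma_{n+1}^1\le\gamma_{n+1}^0$ one pits $2\gamma_n^1+\gamma_n^0-1$ (from \eqref{Dom03}) against $3\gamma_n^0$ and against $2\gamma_n^0+\gamma_n^1$, pits it again against $2\gamma_n^1+\gamma_n^0$, and pits $3\gamma_n^1-1$ against $3\gamma_n^1$, each comparison collapsing to one of the hypothesis inequalities $\gamma_n^{j+1}\le\gamma_n^{j}$; similarly one matches the list \eqref{Dom04} against \eqref{Dom03} to obtain $\gamma_{n+1}^2\le\gamma_{n+1}^1$, and \eqref{Dom05} against \eqref{Dom04} to obtain $\gamma_{n+1}^3\le\gamma_{n+1}^2$. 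Carrying this out term by term gives $\gamma_{n+1}^3\le\gamma_{n+1}^2\le\gamma_{n+1}^1\le\gamma_{n+1}^0$ and closes the induction; as a by-product one even gets the strict chain $\gamma_n^3<\gamma_n^2<\gamma_n^1<\gamma_n^0$ for $n\ge 3$, more than is claimed.

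The only real work is the bookkeeping in this term-by-term matching: for each of the up to six summands in the list for $\gamma_{n+1}^{k}$ one must name an explicit dominating summand in the list for $\gamma_{n+1}^{k+1}$ and check that the inequality between them follows from $\gamma_n^3\le\gamma_n^2\le\gamma_n^1\le\gamma_n^0$ alone. This is mechanical, but it is the crux of the lemma — it is precisely where the self-similar recursions ``remember'' the monotonicity of the hub-restricted domination numbers from one generation to the next. I do not expect a slicker route that avoids the case check, since a naive vertex-swap argument inside $\mathcal{G}_n$ (deleting a non-hub from a set in $\Theta_n^0$ and inserting a hub) need not preserve the property of being a dominating set.
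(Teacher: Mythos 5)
Your proof is correct and follows essentially the same route as the paper: induction on $n$ with base case $n=3$ (values $\gamma_3^3=3$, $\gamma_3^2=4$, $\gamma_3^1=5$, $\gamma_3^0=6$) and an inductive step driven by the recursions \eqref{Dom02}--\eqref{Dom05}. The only cosmetic difference is that the paper uses the induction hypothesis to identify the minimizing term in each recursion (obtaining $\gamma_{t+1}^3=3\gamma_t^3-3$, etc.) and then compares those closed forms, whereas you compare the four lists term by term without pinning down the minimizers; both are valid and rest on the same ingredients.
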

\begin{proof}
We will prove this lemma by mathematical induction on $n$. For $n=3$, we can obtain $\gamma_3^3=3$, $\gamma_3^2=4$, $\gamma_3^1=5$ and $\gamma_3^0=6$ by hand. Thus, the basis step holds immediately. \par
Assuming that the lemma hold for $n=t$, $t\geq 3$. Then, from Eq.~\eqref{Dom05}, $\gamma_{t+1}^3 = \min\{3\gamma_t^2-3,\gamma_t^3+2\gamma_t^2-3,2\gamma_t^3+\gamma_t^2-3,3\gamma_t^3-3\}$. By induction hypothesis, we have
\begin{equation}\label{Fgamma01}
\gamma_{t+1}^3 = 3\gamma_t^3-3.
\end{equation}
Analogously, we can obtain the following relations:
\begin{equation}\label{Fgamma02}
\gamma_{t+1}^2 = \gamma_t^3 + 2\gamma_t^2-2,
\end{equation}
\begin{equation}\label{Fgamma03}
\gamma_{t+1}^1 = 2\gamma_t^2 + \gamma_t^1 - 1,
\end{equation}
\begin{equation}\label{Fgamma04}
\gamma_{t+1}^0 = 3\gamma_t^1.
\end{equation}
By comparing the above Eqs.~(\ref{Fgamma01}-\ref{Fgamma04}) and using the induction hypothesis $\gamma_t^3 \leq \gamma_t^2 \leq \gamma_t^1 \leq \gamma_t^0$, we have $\gamma_{t+1}^3 \leq \gamma_{t+1}^2 \leq \gamma_{t+1}^1 \leq \gamma_{t+1}^0$. Therefore, the lemma is true for $n=t+1$.

This concludes the proof of the lemma.
\end{proof}

\begin{theorem}\label{SFdomN}
The domination number of network $\mathcal{G}_n$, $n\geq3$, is
\begin{equation}\label{Fgamma04x}
\gamma_n = \frac{3^{n-2}+3}{2}\,.
\end{equation}
\end{theorem}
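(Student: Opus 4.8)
The plan is to reduce everything to a single first-order recursion. By Lemma~\ref{Dom06}, for $n\geq 3$ we have $\gamma_n^3\le\gamma_n^2\le\gamma_n^1\le\gamma_n^0$, so Lemma~\ref{Dom01} gives $\gamma_n=\min\{\gamma_n^0,\gamma_n^1,\gamma_n^2,\gamma_n^3\}=\gamma_n^3$. Hence it suffices to find a closed-form expression for $\gamma_n^3$.

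The recursion for $\gamma_n^3$ is already available: Eq.~\eqref{Fgamma01}, derived inside the proof of Lemma~\ref{Dom06}, states that $\gamma_{t+1}^3=3\gamma_t^3-3$ for all $t\geq 3$, and the base value $\gamma_3^3=3$ was computed directly there. I would solve this linear recursion by conjugating to its fixed point $3/2$: setting $a_n=\gamma_n^3-\tfrac32$ turns the recursion into $a_{n+1}=3a_n$, so $a_n=3^{n-3}a_3=3^{n-3}\cdot\tfrac32=\tfrac{3^{n-2}}{2}$, and therefore $\gamma_n^3=\tfrac{3^{n-2}+3}{2}$. Combining this with the previous paragraph yields $\gamma_n=\tfrac{3^{n-2}+3}{2}$ for every $n\geq 3$, which is Eq.~\eqref{Fgamma04x}.

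There is no genuinely hard step remaining here — the combinatorial content has been absorbed into Lemmas~\ref{Dom01} through~\ref{Dom06}. The only points needing care are bookkeeping ones: the ordering $\gamma_n^3\le\cdots\le\gamma_n^0$ and the clean recursion $\gamma_{t+1}^3=3\gamma_t^3-3$ are guaranteed only from $n=3$ (resp.\ $t=3$) onward, so $\gamma_3^3=3$ is the correct anchor and the statement must exclude $n=1,2$. As a consistency check I would verify $\gamma_3=(3^{1}+3)/2=3$ against the hand computation, and $\gamma_4=3\cdot 3-3=6=(3^{2}+3)/2$ against one further application of the recursion.
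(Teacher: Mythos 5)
Your proposal is correct and follows essentially the same route as the paper: both identify $\gamma_n=\gamma_n^3$ via Lemmas~\ref{Dom01} and~\ref{Dom06} and then solve the linear recursion $\gamma_{n+1}^3=3\gamma_n^3-3$ from Eq.~\eqref{Fgamma01} with anchor $\gamma_3^3=3$. The only cosmetic difference is that you solve the recursion by shifting to the fixed point $3/2$, while the paper simply states that the recursion with the initial condition yields the result.
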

\begin{proof}
By Lemma~\ref{Dom01} and Eq.~\eqref{Fgamma01}, we obtain
\begin{equation}\label{Fgamma05}
\gamma_{n+1} = \gamma_{n+1}^3 = 3\gamma_n^3 - 3 = 3\gamma_n - 3\,,
\end{equation}
which, with the initial condition $\gamma_{3}=3$, is solved to yield the result.
\end{proof}

Theorem~\ref{SFdomN}, especially Eq.~\eqref{Fgamma05}, implies that any minimum dominating set of $\mathcal{G}_n$ includes the three hub vertices.
\begin{lemma}
The smallest number of vertices in a dominating set of $\mathcal{G}_n$, $n\geq3$, which contains exactly $2$, $1$, and $0$ hub vertices, is
\begin{equation}\label{Fgamma06}
\gamma_n^2 = \frac{3^{n-2}+1}{2}+2^{n-2},
\end{equation}
\begin{equation}\label{Fgamma07}
\gamma_n^1 = \frac{3^{n-2}-1}{2}+2^{n-1}
\end{equation}
and
\begin{equation}\label{Fgamma08}
\gamma_n^0 = \frac{3^{n-2}-3}{2}+3\cdot 2^{n-2},
\end{equation}
respectively.
\end{lemma}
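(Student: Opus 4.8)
The plan is to solve, one at a time, the linear recursions for $\gamma_n^2$, $\gamma_n^1$ and $\gamma_n^0$ that are already available from the proof of Lemma~\ref{Dom06}, substituting in the closed form $\gamma_n^3=(3^{n-2}+3)/2$ from Theorem~\ref{SFdomN} together with the hand-computed initial values $\gamma_3^3=3$, $\gamma_3^2=4$, $\gamma_3^1=5$, $\gamma_3^0=6$. Since Lemma~\ref{Dom06} has been established for every $n\geq 3$, the simplified recursions \eqref{Fgamma01}--\eqref{Fgamma04} hold for all $t\geq 3$, so no further inspection of the $\min$-expressions \eqref{Dom02}--\eqref{Dom05} is required.

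First I would treat $\gamma_n^2$. Inserting $\gamma_t^3=(3^{t-2}+3)/2$ into \eqref{Fgamma02} turns it into the first-order linear recurrence $\gamma_{t+1}^2=2\gamma_t^2+\tfrac12(3^{t-2}-1)$. The homogeneous part contributes a multiple of $2^t$, while a particular solution of the shape $a\,3^{t-2}+b$ is pinned down by matching coefficients to be $(3^{t-2}+1)/2$; imposing $\gamma_3^2=4$ fixes the remaining constant and gives \eqref{Fgamma06}. Next, for $\gamma_n^1$, substituting the formula for $\gamma_n^2$ just obtained into \eqref{Fgamma03} collapses it to the telescoping form $\gamma_{t+1}^1-\gamma_t^1=3^{t-2}+2^{t-1}$; summing from $t=3$ to $t=n-1$ starting from $\gamma_3^1=5$ and evaluating the two resulting geometric sums yields \eqref{Fgamma07}. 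Finally, \eqref{Fgamma04} gives $\gamma_n^0=3\gamma_{n-1}^1$ for $n\geq 4$, and substituting the formula for $\gamma_{n-1}^1$ produces \eqref{Fgamma08}; one then checks that this same expression also reproduces $\gamma_3^0=6$, so it is valid for all $n\geq 3$.

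I do not expect any genuine obstacle: the substantive work --- decoupling the four coupled $\min$-recursions \eqref{Dom02}--\eqref{Dom05} into the scalar linear recursions \eqref{Fgamma01}--\eqref{Fgamma04} --- was already carried out in the proof of Lemma~\ref{Dom06}, and what is left is the routine solution of first-order linear recurrences. The only points needing mild care are bookkeeping ones: getting the index ranges of the geometric sums right so that the $t=3$ base case is handled consistently, and verifying that the closed form for $\gamma_n^0$ deduced from the $n\geq 4$ recurrence happens to match the separately computed value at $n=3$.
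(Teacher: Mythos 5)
Your proposal is correct and follows essentially the same route as the paper: substitute the closed form $\gamma_n^3=(3^{n-2}+3)/2$ into the decoupled recursion \eqref{Fgamma02} and solve with $\gamma_3^2=4$, then proceed analogously for $\gamma_n^1$ and $\gamma_n^0$ via \eqref{Fgamma03} and \eqref{Fgamma04}. You merely spell out the ``analogously'' steps (the telescoping sum for $\gamma_n^1$ and the shift $\gamma_n^0=3\gamma_{n-1}^1$) that the paper leaves implicit, and all the resulting closed forms check out.
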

\begin{proof}
By Eqs.~\eqref{Fgamma04x} and~\eqref{Fgamma05}, we have $\gamma_n^3=\gamma_n=\frac{3^{n-2}+3}{2}$. Considering  Eq.~\eqref{Fgamma02}, we obtain a recursive equation for $\gamma_n^2$ as:
\begin{equation}
\gamma_{n+1}^2 = 2\gamma_n^2 + \frac{3^{n-2}+3}{2}-2,
\end{equation}
which coupled with $\gamma_3^2 = 4$ is solved to give Eq.~\eqref{Fgamma06}.

Analogously, we can prove Eqs.~\eqref{Fgamma07} and~\eqref{Fgamma08}.
\end{proof}

\begin{theorem}
For network $\mathcal{G}_n$,  $n \geq 3$, there is a unique minimum dominating set.
\end{theorem}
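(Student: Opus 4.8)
The plan is to induct on $n$, exploiting the self-similar decomposition $\mathcal{G}_{n+1}=\mathcal{G}_n^1\cup\mathcal{G}_n^2\cup\mathcal{G}_n^3$ and the remark following Theorem~\ref{SFdomN}. That remark---which follows from $\gamma_n=\gamma_n^3<\gamma_n^2\le\gamma_n^1\le\gamma_n^0$ for $n\ge3$ (Lemma~\ref{Dom06} together with Eqs.~\eqref{Fgamma04x} and~\eqref{Fgamma06})---says that every minimum dominating set of $\mathcal{G}_n$ contains all three hubs $A_n,B_n,C_n$. For the base case $n=3$ this already closes the argument: a dominating set of size $\gamma_3=3$ contains the three hubs, hence must be exactly $\{A_3,B_3,C_3\}$.

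For the inductive step I would assume $\mathcal{G}_n$ ($n\ge3$) has a unique MDS $M_n$ (which necessarily contains $A_n,B_n,C_n$) and take an arbitrary MDS $M$ of $\mathcal{G}_{n+1}$. Writing $M_\theta=M\cap V(\mathcal{G}_n^\theta)$, the first step is to show each $M_\theta$ is a dominating set of $\mathcal{G}_n^\theta$: since $M\supseteq\{A_{n+1},B_{n+1},C_{n+1}\}$, the two identified hubs of every copy lie in $M_\theta$; and since distinct copies share no edge, every other vertex of $\mathcal{G}_n^\theta$ has all of its $\mathcal{G}_{n+1}$-neighbours inside that copy, so whatever dominates it in $\mathcal{G}_{n+1}$ already lies in $M_\theta$. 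The pairwise intersections of the three copies are the single vertices $A_{n+1},B_{n+1},C_{n+1}$, all of which are in $M$, so inclusion--exclusion gives $|M|=|M_1|+|M_2|+|M_3|-3$; with $|M|=\gamma_{n+1}=3\gamma_n-3$ from Eq.~\eqref{Fgamma05}, this forces $|M_1|+|M_2|+|M_3|=3\gamma_n=3\gamma_n^3$.

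The next step is to rule out that any $M_\theta$ omits a hub of its copy. If some $M_\theta$ contained only the two identified hubs, then $|M_\theta|\ge\gamma_n^2=\gamma_n^3+2^{n-2}-1\ge\gamma_n^3+1$ by Eqs.~\eqref{Fgamma04x} and~\eqref{Fgamma06}, while $|M_{\theta'}|\ge\gamma_n^3$ for the other two copies, contradicting $|M_1|+|M_2|+|M_3|=3\gamma_n^3$. Hence each $M_\theta$ is a dominating set of $\mathcal{G}_n^\theta$ of size exactly $\gamma_n^3=\gamma_n$ that contains all three hubs---that is, a minimum dominating set of $\mathcal{G}_n^\theta$. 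By the induction hypothesis $M_\theta$ must be the transplanted copy of $M_n$ in $\mathcal{G}_n^\theta$, and because $M_n$ contains the three hubs these three choices glue consistently along the identified vertices. Therefore $M=M_1\cup M_2\cup M_3$ is completely determined, which completes the induction.

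The part I expect to require the most care is the bookkeeping in the second paragraph: verifying that $M_\theta$ dominates \emph{all} of $\mathcal{G}_n^\theta$, including the one non-identified hub of each copy (which, not being a hub of $\mathcal{G}_{n+1}$, need not belong to $M$, but all of whose $\mathcal{G}_{n+1}$-neighbours lie in that copy), and pinning down the ``$-3$'' in the cardinality count exactly. Everything else rests only on the strict inequality $\gamma_n^3<\gamma_n^2$ and the recursion $\gamma_{n+1}=3\gamma_n-3$, both already available.
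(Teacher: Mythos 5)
Your argument is correct and follows essentially the same route as the paper: decompose $\mathcal{G}_{n+1}$ into the three copies of $\mathcal{G}_n$, show that any MDS of $\mathcal{G}_{n+1}$ restricts to an MDS (containing all three hubs) on each copy, and induct from the unique MDS of $\mathcal{G}_3$. The only difference is that you supply the details the paper delegates to Eq.~\eqref{Fgamma05} and Fig.~\ref{Theta3} --- namely that each restriction $M_\theta$ dominates its copy, the inclusion--exclusion count $|M|=|M_1|+|M_2|+|M_3|-3$, and the strict gap $\gamma_n^3<\gamma_n^2$ that forces every $M_\theta$ into $\Theta_n^3$ --- which makes your write-up a more complete version of the same proof.
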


\begin{proof}
Equation~\eqref{Fgamma05} and Fig.~\ref{Theta3} shows that for $n \geq 3$ any minimum dominating set of $\mathcal{G}_{n+1}$ is in fact the union of minimum dominating sets, $\Theta_n^3$, of the three replicas of $\mathcal{G}_{n}$ (i.e. $\mathcal{G}_{n}^{1}$, $\mathcal{G}_{n}^{2}$, and $\mathcal{G}_{n}^{3}$) forming $\mathcal{G}_{n+1}$, with each pair of their identified hub vertices being counted only once. Thus, any minimum dominating set of $\mathcal{G}_{n+1}$ is determined by those of $\mathcal{G}_{n}^{1}$, $\mathcal{G}_{n}^{2}$, and $\mathcal{G}_{n}^{3}$.  Since the minimum dominating set of $\mathcal{G}_3$ is unique, there is unique  minimum dominating set for $G_n$ when $n \geq 3$. Moreover, it is easy to see that the unique dominating set of $\mathcal{G}_n$, $n \geq 3$, is actually the set of all vertices of $\mathcal{G}_{n-2}$.
\end{proof}

\section{Domination number and minimum dominating sets in Sierpi\'nski graph}

In this section, we study the domination number and the number of minimum dominating sets in the Sierpi\'nski graph, and compare the results with those of the above-studied scale-free network, aiming to uncover the effect of scale-free property on the domination number and the number minimum dominating sets.

\subsection{Construction of Sierpi\'nski graph}

The Sierpinski graph is also built in an iterative way. Let  $\mathcal{S}_n$, $n\geq 1$, denote the $n$-generation graph. Then the Sierpi\'nski graph is created as follows. When $n=1$, $\mathcal{S}_1$ is an equilateral triangle containing three vertices and three edges. For $n =2$, perform a bisection of the sides of  $\mathcal{S}_1$   forming four small replicas of the original equilateral triangle, and remove the central downward pointing triangle to get $\mathcal{S}_2$. For $n>2$, $\mathcal{S}_n$ is obtained from $\mathcal{S}_{n-1}$ by performing the bisecting and removing operations for each triangle in $\mathcal{S}_{n-1}$. Figure~\ref{Sierp01} shows theSierpi\'nski graphs for $n=1,2,3$.

\begin{figure}
\begin{center}
\includegraphics[width=0.60\textwidth]{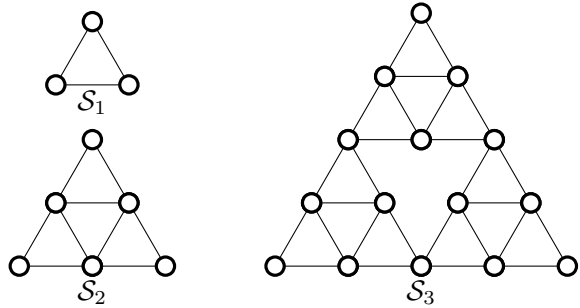} 
\end{center}
\caption[kurzform]{The first three generations of the
Sierpi\'nski graph.} \label{Sierp01}
\end{figure}

It is easy to verify that both the number of vertices and the number of edges in the Sierpi\'nski $\mathcal{S}_n$ graph are identical to those for the scale-free network $\mathcal{G}_n$, which are $N_n=(3^{n}+3)/2$ and $E_n=3^{n+1}$, respectively.

Different from $\mathcal{G}_n$, the Sierpi\'nski graph is homogeneous, since the degree of vertices in $\mathcal{S}_n$ is equal to 3, excluding the topmost vertex $A_n$, leftmost vertex $B_n$ and the rightmost vertex $C_n$, the degree of which is 2. We call these three vertices with degree 2 as outmost vertices.

\begin{figure}
\begin{center}
\includegraphics[width=8.5cm]{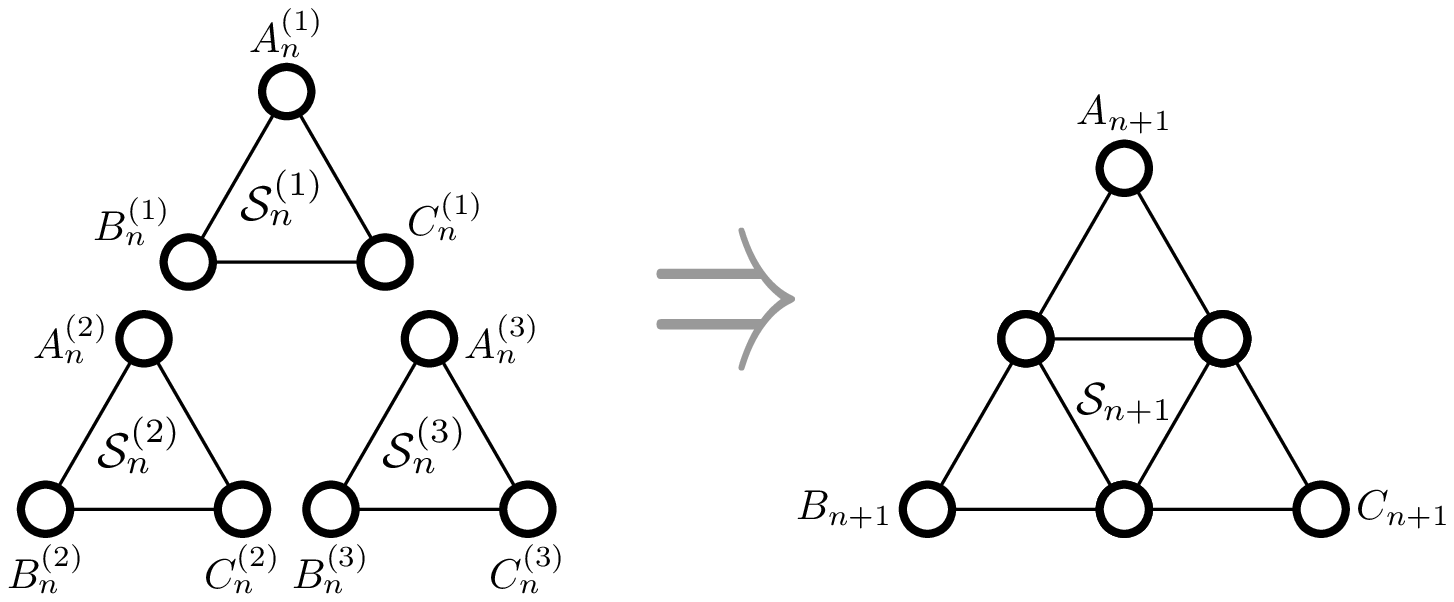}
\caption{Alternative construction of the Sierpi\'nski graph.} \label{merge}
\end{center}
\end{figure}

The Sierpi\'nski graph is also self-similar, which suggests another construction method of the graph. Given the $n$th generation graph $\mathcal{S}_{n}$, the $(n+1)$th generation graph $\mathcal{S}_{n+1}$ can be obtained by joining three replicas of $\mathcal{S}_{n}$ at their outmost vertices, see Fig.~\ref{merge}. Let $\mathcal{S}_{n}^{\theta}$, $\theta=1,2,3$, be three copies of $\mathcal{S}_{n}$, the outmost vertices of which are represented by $A_{n}^{\theta}$, $B_{n}^{\theta}$, and $C_{n}^{\theta}$, respectively. Then, $\mathcal{S}_{n+1}$ can be obtained by joining $\mathcal{S}_{n}^{\theta}$, with $A_{n}^{1}$, $B_{n}^{2}$, and $C_{n}^{3}$ being the outmost vertices $A_{n+1}$, $B_{n+1}$, and $C_{n+1}$  of $\mathcal{S}_{n+1}$.


\subsection{Domination number}

In the case without confusion, for the Sierpi\'nski graph $\mathcal{S}_n$ we employ the same notation as those for $\mathcal{G}_n$ studied in the preceding section. Let $\gamma_n$ be the domination number of $\mathcal{S}_n$. Note that all dominating sets of $\mathcal{S}_n$ can be classified into four types: $\Omega_n^0$, $\Omega_n^1$, $\Omega_n^2$, and $\Omega_n^3$, where $\Omega_n^k$ ($k = 0,1,2,3$) denotes those dominating sets, each including exactly $k$ outmost vertices. Let $\Theta_n^k$, $k = 0,1,2,3$, be the subsets of $\Omega_n^k$, each of which has the smallest cardinality, denoted by $\gamma_n^k$. 


\begin{lemma}\label{leSGDom01}
The domination number of the Sierpi\'nski graph $\mathcal{S}_n$, $n\geq1$, is $\gamma_n = \min\{\gamma_n^0,\gamma_n^1,\gamma_n^2,\gamma_n^3\}$.
\end{lemma}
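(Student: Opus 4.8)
The plan is to prove this exactly as Lemma~\ref{Dom01} was established for the pseudofractal scale-free web: the statement is a pure partition (bookkeeping) argument and uses nothing about the particular structure of $\mathcal{S}_n$. First I would note that $\mathcal{S}_n$ has precisely three outmost vertices $A_n$, $B_n$, $C_n$, and that any subset $\mathcal{D}$ of the vertex set of $\mathcal{S}_n$ contains exactly $k$ of them for a uniquely determined integer $k\in\{0,1,2,3\}$. Hence the four classes $\Omega_n^0$, $\Omega_n^1$, $\Omega_n^2$, $\Omega_n^3$ of dominating sets of $\mathcal{S}_n$ are pairwise disjoint, and every dominating set of $\mathcal{S}_n$ lies in exactly one of them; in other words, the family of all dominating sets of $\mathcal{S}_n$ equals $\Omega_n^0\cup\Omega_n^1\cup\Omega_n^2\cup\Omega_n^3$.

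Next I would unwind the definitions. By construction $\Theta_n^k$ is a member of $\Omega_n^k$ of least cardinality, so $\gamma_n^k=\min\{|\mathcal{D}|:\mathcal{D}\in\Omega_n^k\}$ (with the harmless convention that this value is $+\infty$ should some $\Omega_n^k$ happen to be empty, as occurs for small $n$), while the domination number is $\gamma_n=\min\{|\mathcal{D}|:\mathcal{D}\text{ a dominating set of }\mathcal{S}_n\}$. Minimizing over the union of the four classes and splitting the minimum accordingly yields
\begin{equation*}
\gamma_n=\min_{0\le k\le 3}\ \min_{\mathcal{D}\in\Omega_n^k}|\mathcal{D}|=\min\{\gamma_n^0,\gamma_n^1,\gamma_n^2,\gamma_n^3\},
\end{equation*}
which is the assertion.

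I do not anticipate any genuine difficulty here: the lemma merely records the reduction of the global problem (find a smallest dominating set of $\mathcal{S}_n$) to the four constrained problems (find a smallest dominating set of $\mathcal{S}_n$ containing a prescribed number of outmost vertices), mirroring the role of Lemma~\ref{Dom01} in Section~2. The only point worth a sentence is that the four classes are exhaustive --- immediate because the number of outmost vertices contained in $\mathcal{D}$ is an integer in $\{0,1,2,3\}$ --- so nothing is lost when the minimum is taken. The substantive work of this subsection comes afterwards, in writing down and solving the self-similar recursions for the $\gamma_n^k$, in analogy with Eqs.~\eqref{Dom02}--\eqref{Dom05}.
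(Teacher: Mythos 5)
Your proposal is correct and matches the paper's treatment: the paper states this lemma as immediate by definition (exactly as with the corresponding lemma for $\mathcal{G}_n$), and your partition-of-dominating-sets argument, including the $+\infty$ convention for empty classes, is precisely the bookkeeping that justifies it.
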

Thus, in order to  determine $\gamma_n$ for $\mathcal{S}_n$, we can alternatively evaluate $\gamma_n^k$, $k = 0,1,2,3$, which can be solved by using the self-similar structure of the Sierpinski graph.

To determine $\gamma_n^k$, $k = 0,1,2,3$, we introduce some more quantities assisting the calculation. We use $\mathcal{S}_n^k$ to denote a subgraph of the Sierpi\'nski graph $\mathcal{S}_n$, which is obtained from $\mathcal{S}_n$ by removing $k$, $k =1,2,3$, outmost vertices and the edges incident to them. Let $\phi_n^i$, $i=0,1,2$ be the smallest number of vertices in a dominating set of $\mathcal{S}_n^1$ containing $i$ outmost vertices but excluding either of the two neighbors of the removed outmost vertex; let  $\xi_n^j$, $j=0,1$, be the smallest number of vertices in a dominating set of $\mathcal{S}_n^2$ including $j$ outmost vertex but excluding any neighbor of the two removed outmost vertices; and let $\eta_n$ denote the smallest number of vertices in a dominating set of $\mathcal{S}_n^3$, which does not include the neighbors of the three removed outmost vertices.

According to the self-similar property of the Sierpi\'nski graph, we can establish some relations between the quantities defined above.
\begin{lemma}
\label{leSGDom02}
For any integer $n \geq 3$, the following relations hold.
\begin{small}
\begin{align}\label{SGDom01}
\gamma_{n+1}^0 &= \min\{\gamma_n^0+\phi_n^0+\xi_n^0, 3\phi_n^0, 2\gamma_n^0 +\xi_n^0, \gamma_n^0+2\phi_n^0, 2\gamma_n^0+\phi_n^0, 3\gamma_n^0, \nonumber\\
&\quad 2\gamma_n^1+\xi_n^0-1, \gamma_n^0+2\phi_n^1-1, \gamma_n^1+\phi_n^1+\phi_n^0-1, 2\gamma_n^1+\phi_n^0-1, \gamma_n^1+\gamma_n^0+\phi_n^1-1,\nonumber\\
&\quad 2\gamma_n^1+\gamma_n^0-1, \gamma_n^2+\gamma_n^1+\phi_n^1-2, \gamma_n^2+2\gamma_n^1-2,3\gamma_n^2-3\},
\end{align}
\begin{align}\label{SGDom02}
\gamma_{n+1}^1 &= \min\{\gamma_n^0+\phi_n^1+\xi_n^0, \gamma_n^1+\phi_n^0+\xi_n^0, \gamma_n^0+\phi_n^0+\xi_n^1, \phi_n^1+2\phi_n^0, 2\gamma_n^0+\xi_n^1, \gamma_n^0+\phi_n^1+\phi_n^0, \nonumber\\
&\quad \gamma_n^1+\gamma_n^0+\xi_n^0, \gamma_n^1+2\phi_n^0, 2\gamma_n^0+\phi_n^1, \gamma_n^1+\gamma_n^0+\phi_n^0, \gamma_n^1+2\gamma_n^0, \nonumber\\
&\quad \gamma_n^2+\gamma_n^1+\xi_n^0-1, \gamma_n^0+\phi_n^2+\phi_n^1-1, \gamma_n^1+\phi_n^2+\phi_n^0-1,\gamma_n^2+\phi_n^1+\phi_n^0-1, \nonumber\\
&\quad \gamma_n^1+\gamma_n^0+\phi_n^2-1,\gamma_n^2+\gamma_n^0+\phi_n^1-1,\gamma_n^2+\gamma_n^1+\phi_n^0-1,\gamma_n^2+\gamma_n^1+\gamma_n^0-1,\nonumber\\
&\quad2\gamma_n^1+\xi_n^1-1, \gamma_n^1+2\phi_n^1-1, 2\gamma_n^1+\phi_n^1-1, 3\gamma_n^1-1,\gamma_n^3+\gamma_n^1+\phi_n^1-2,\nonumber\\
&\quad \gamma_n^3+2\gamma_n^1-2, \gamma_n^2+\gamma_n^1+\phi_n^2-2,2\gamma_n^2+\phi_n^1-2, 2\gamma_n^2+\gamma_n^1-2,\gamma_n^3+2\gamma_n^2-3\},
\end{align}
\begin{align}\label{SGDom03}
\gamma_{n+1}^2 &= \min\{\gamma_n^0+\phi_n^1+\xi_n^1, \gamma_n^1+\phi_n^0+\xi_n^1,\gamma_n^1+\phi_n^1+\xi_n^0, 2\phi_n^1+\phi_n^0, 2\gamma_n^1+\xi_n^0, \gamma_n^1+\gamma_n^0+\xi_n^1,\nonumber\\
&\quad \gamma_n^0+2\phi_n^1, \gamma_n^1+\phi_n^1+\phi_n^0, 2\gamma_n^1+\phi_n^0, \gamma_n^1+\gamma_n^0+\phi_n^1, 2\gamma_n^1+\gamma_n^0, \nonumber\\
&\quad \gamma_n^0+2\phi_n^2-1, \gamma_n^2+\phi_n^2+\phi_n^0-1, 2\gamma_n^2+\xi_n^0-1, \gamma_n^2+\gamma_n^0+\phi_n^2-1,2\gamma_n^2+\phi_n^0-1,\nonumber\\
&\quad 2\gamma_n^2+\gamma_n^0-1, \gamma_n^1+\phi_n^2+\phi_n^1-1, \gamma_n^2+\gamma_n^1+\xi_n^1-1,\gamma_n^2+2\phi_n^1-1,\nonumber\\
&\quad\gamma_n^2+\gamma_n^1+\phi_n^1-1, 2\gamma_n^1+\phi_n^2-1, \gamma_n^2+2\gamma_n^1-1,\gamma_n^3+\gamma_n^2+\phi_n^1-2, \nonumber\\
&\quad\gamma_n^3+\gamma_n^1+\phi_n^2-2,\gamma_n^3+\gamma_n^2+\gamma_n^1-2,2\gamma_n^2+\phi_n^2-2,3\gamma_n^2-2,2\gamma_n^3+\gamma_n^2-3\},
\end{align}
\begin{align}\label{SGDom04}
\gamma_{n+1}^3 &= \min\{\gamma_n^1+\phi_n^1+\xi_n^1, 3\phi_n^1, 2\gamma_n^1+\xi_n^1, \gamma_n^1+2\phi_n^1,2\gamma_n^1+\phi_n^1,3\gamma_n^1,\nonumber\\
&\quad 2\gamma_n^2+\xi_n^1-1,\gamma_n^1+2\phi_n^2-1,\gamma_n^2+\phi_n^2+\phi_n^1-1,2\gamma_n^2+\phi_n^1-1,\gamma_n^2+\gamma_n^1+\phi_n^2-1,\nonumber\\
&\quad 2\gamma_n^2+\gamma_n^1-1,\gamma_n^3+\gamma_n^2+\phi_n^2-2,\gamma_n^3+2\gamma_n^2-2,3\gamma_n^3-3\},
\end{align}
\begin{align}\label{SGDom05}
\phi_{n+1}^0 &= \min\{\gamma_n^0+2\xi_n^0, 2\phi_n^0+\xi_n^0, \gamma_n^0+\phi_n^0+\eta_n, 2\gamma_n^0+\eta_n, \gamma_n^0+\phi_n^0+\xi_n^0,\nonumber\\
&\quad 3\phi_n^0, 2\gamma_n^0+\xi_n^0, \gamma_n^0+2\phi_n^0, 2\gamma_n^0+\phi_n^0, \gamma_n^1+\phi_n^1+\xi_n^0-1, 2\phi_n^1+\phi_n^0-1,  \nonumber\\
&\quad \gamma_n^1+\phi_n^0+\xi_n^1-1, \gamma_n^0+\phi_n^1+\xi_n^1-1, \gamma_n^0+2\phi_n^1-1, \gamma_n^1+\gamma_n^0+\xi_n^1-1,\nonumber\\
&\quad \gamma_n^1+\phi_n^1+\phi_n^0-1, \gamma_n^1+\gamma_n^0+\phi_n^1-1, 2\gamma_n^1+\eta_n-1, 2\phi_n^1+\phi_n^0-1, \nonumber\\
&\quad \gamma_n^1+\phi_n^1+\xi_n^0-1, 2\gamma_n^1+\xi_n^0-1, \gamma_n^1+\pi_n^1+\phi_n^0-1, 2\gamma_n^1+\phi_n^0-1, \nonumber\\
&\quad \gamma_n^1+\phi_n^2+\phi_n^1-2, 2\gamma_n^1+\phi_n^2-2, \gamma_n^2+\gamma_n^1+\xi_n^1-2, \gamma_n^2+2\phi_n^1-2, \nonumber\\
&\quad \gamma_n^2+\gamma_n^1+\phi_n^1-2, 2\gamma_n^2+\phi_n^2-3\},
\end{align}
\begin{align}\label{SGDom06}
\phi_{n+1}^1 &= \min\{\gamma_n^1+2\xi_n^0, 2\phi_n^0+\xi_n^1, \gamma_n^0+\xi_n^1+\xi_n^0, \phi_n^1+\phi_n^0
+\xi_n^0, \gamma_n^1+\phi_n^0+\eta_n, \gamma_n^0+\phi_n^1+\eta_n, \nonumber\\
&\quad \phi_n^1+\phi_n^0+\xi_n^0, \phi_n^1+2\phi_n^0, \gamma_n^0+\phi_n^1+\xi_n^0, \gamma_n^1+\phi_n^0+\xi_n^0, \gamma_n^0+\phi_n^1+\phi_n^0, \gamma_n^0+\gamma_n^1+\xi_n^0,\nonumber\\
&\quad \gamma_n^1+2\phi_n^0, \gamma_n^1+\gamma_n^0+\phi_n^0, \gamma_n^2+\phi_n^1+\xi_n^0-1, \gamma_n^0+\phi_n^2+\xi_n^1-1, \gamma_n^2+\phi_n^0+\xi_n^1-1, \nonumber\\
&\quad \phi_n^2+\phi_n^1+\phi_n^0-1, \gamma_n^0+\phi_n^2+\phi_n^1-1, \gamma_n^2+\phi_n^1+\phi^0-1, \gamma_n^2+\gamma_n^0+\xi_n^1-1, \nonumber\\
&\quad \gamma_n^2+\gamma_n^0+\phi_n^1-1, \gamma_n^1+\phi_n^1+\xi_n^1-1, 3\phi_n^1-1, \gamma_n^1+2\phi_n^1-1, 2\gamma_n^1+\xi_n^1-1,2\gamma_n^1+\phi_n^1-1, \nonumber\\
&\quad \gamma_n^2+\gamma_n^1+\eta_n-1, \gamma_n^1+\phi_n^2+\xi_n^0-1, \gamma_n^2+\gamma_n^1+\xi_n^0-1, \gamma_n^1+\phi_n^2+\phi_n^0-1, \nonumber\\
&\quad\gamma_n^2+\gamma_n^1+\phi_n^0-1, \gamma_n^2+\phi_n^2+\phi_n^1-2,  \gamma_n^1+2\phi_n^2-2, \gamma_n^2+\gamma_n^1+\phi_n^2-2,2\gamma_n^2+\xi_n^1-2,  \nonumber\\
&\quad 2\gamma_n^2+\phi_n^1-2, \gamma_n^3+\gamma_n^1+\xi_n^1-2, \gamma_n^3+2\phi_n^1-2, \gamma_n^3+\gamma_n^1+\phi_n^1-2,\gamma_n^3+\gamma_n^2+\phi_n^2-3\},
\end{align}
\begin{align}\label{SGDom07}
\phi_{n+1}^2 &= \min\{\phi_n^1+\phi_n^0+\xi_n^1, \gamma_n^1+\phi_n^1+\eta_n, 2\phi_n^1+\xi_n^0, \gamma_n^1+\xi_n^1+\xi_n^0, 2\phi_n^1+\phi_n^0, \gamma_n^1+\phi_n^1+\xi_n^0,\nonumber\\
&\quad 2\gamma_n^1+\xi_n^0, \gamma_n^1+\phi_n^1+\phi_n^0, 2\gamma_n^1+\phi_n^0, 2\phi_n^2+\phi_n^0-1, 2\gamma_n^2+\eta_n-1, \gamma_n^2+\phi_n^2+\xi_n^0-1,\nonumber\\
&\quad \gamma_n^2+\phi_n^2+\phi_n^0-1, 2\gamma_n^2+\xi_n^0-1, 2\gamma_n^2+\phi_n^0-1,\gamma_n^1+\phi_n^2+\xi_n^1-1,\gamma_n^2+\phi_n^1+\xi_n^1-1, \nonumber\\
&\quad \phi_n^2+2\phi_n^1-1, \gamma_n^2+\gamma_n^1+\xi_n^1-1, \gamma_n^2+2\phi_n^1-1, \gamma_n^1+\phi_n^2+\phi_n^1-1,\gamma_n^2+\gamma_n^1+\phi_n^1-1,\nonumber\\
&\quad \gamma_n^2+2\phi_n^2-2, 2\gamma_n^2+\phi_n^2-2, \gamma_n^3+\gamma_n^2+\xi_n^1-2, \gamma_n^3+\phi_n^2+\phi_n^1-2, \gamma_n^3+\gamma_n^2+\phi_n^1-2, \nonumber\\
&\quad 2\gamma_n^3+\phi_n^2-3\},
\end{align}
\begin{align}\label{SGDom08}
\xi_{n+1}^0 &= \min\{\gamma_n^0+\xi_n^0+\eta_n, \phi_n^0+2\xi_n^0, 2\phi_n^0+\eta_n, \gamma_n^0+\phi_n^0+\eta_n, 2\phi_n^0+\xi_n^0, \gamma_n^0+2\xi_n^0, 3\phi_n^0, \nonumber\\
&\quad\gamma_n^0+\phi_n^0+\xi_n^0,\gamma_n^0+2\phi_n^0, \gamma_n^0+2\xi_n^1-1, 2\phi_n^1+\xi_n^0-1, \phi_n^1+\phi_n^0+\xi_n^1-1, 2\phi_n^1+\phi_n^0-1, \nonumber\\
&\quad \gamma_n^0+\phi_n^1+\xi_n^1-1, \gamma_n^0+2\phi_n^1-1,\gamma_n^1+\phi_n^1+\eta_n-1, \gamma_n^1+\xi_n^1+\xi_n^0-1, \gamma_n^1+\phi_n^1+\xi_n^0-1, \nonumber\\
&\quad \gamma_n^1+\phi_n^0+\xi_n^1-1, \gamma_n^1+\xi_n^1+\xi_n^0-1, \gamma_n^2+\phi_n^1+\xi_n^1-2, \gamma_n^2+2\phi_n^1-2, \nonumber\\
&\quad \gamma_n^1+\phi_n^2+\xi_n^1-2, \gamma_n^1+\phi_n^2+\phi_n^1-2, \phi_n^2+2\phi_n^1-2, \gamma_n^2+2\phi_n^2-3\},
\end{align}
\begin{align}\label{SGDom09}
\xi_{n+1}^1 &= \min\{\gamma_n^1+\xi_n^0+\eta_n, \phi_n^0+\xi_n^1+\xi_n^0, \phi_n^1+\phi_n^0+\eta_n, \phi_n^1+2\xi_n^0, \phi_n^1+\phi_n^0+\xi_n^0, \gamma_n^1+2\xi_n^0,\nonumber\\
&\quad\phi_n^1+2\phi_n^0, \gamma_n^1+\phi_n^0+\xi_n^0, \gamma_n^1+2\phi_n^0, \gamma_n^1+2\xi_n^1-1, 2\phi_n^1+\xi_n^1-1, 3\phi_n^1-1, \nonumber\\
&\quad \gamma_n^1+\phi_n^1+\xi_n^1-1, \gamma_n^1+2\phi_n^1-1, \phi_n^2+\phi_n^0+\xi_n^1-1, \gamma_n^2+\phi_n^1+\eta_n-1, \gamma_n^2+\xi_n^1+\xi_n^0-1, \nonumber\\
&\quad\gamma_n^2+\phi_n^1+\xi_n^0 -1, \gamma_n^2+\phi_n^1+\xi_n^0-1, \phi_n^2+\phi_n^1+\phi_n^0-1, \gamma_n^2+\phi_n^0+\xi_n^1-1, \nonumber\\
&\quad \gamma_n^2+\phi_n^1+\phi_n^0-1, \gamma_n^3+\phi_n^1+\xi_n^1-2, \gamma_n^3+2\phi_n^1-2, \gamma_n^2+\phi_n^2+\xi_n^1-2, \nonumber\\
&\quad 2\phi_n^2+\phi_n^1-2, \gamma_n^2+\phi_n^2+\phi_n^1-2, \gamma_n^3+2\phi_n^2-3\},
\end{align}
\begin{align}\label{SGDom10}
\eta_{n+1} &= \min\{\phi_n^0+\xi_n^0+\eta_n, 3\xi_n^0, \phi_n^0+2\xi_n^0, 2\phi_n^0+\eta_n, 2\phi_n^0+\xi_n^0,  3\phi_n^0, \phi_n^0+2\xi_n^1-1, 2\phi_n^1+\eta_n-1, \nonumber\\
&\quad \phi_n^1+\xi_n^1+\xi_n^0-1, 2\phi_n^1+\xi_n^0-1, \phi_n^1+\phi_n^0+\xi_n^1-1, 2\phi_n^1+\phi_n^0-1, \phi_n^2+\phi_n^1+\xi_n^1-2, \nonumber\\
&\quad\phi_n^2+2\phi_n^1-2, 3\phi_n^2-3\}.
\end{align}
\end{small}
\end{lemma}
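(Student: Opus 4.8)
The plan is to imitate the argument behind the recursions \eqref{Dom02}--\eqref{Dom05}, but now carrying along the auxiliary quantities $\phi_n^i$, $\xi_n^j$ and $\eta_n$, which are precisely what is needed to record how each copy of $\mathcal{S}_n$ behaves at the vertices where the copies are glued together. Recall from Fig.~\ref{merge} that $\mathcal{S}_{n+1}$ is obtained from three copies $\mathcal{S}_n^1,\mathcal{S}_n^2,\mathcal{S}_n^3$ of $\mathcal{S}_n$ by identifying three pairs of outmost vertices; call the three resulting vertices the \emph{junctions}, and observe that each copy contains exactly two junctions together with one outmost vertex of $\mathcal{S}_{n+1}$, and that every edge of $\mathcal{S}_{n+1}$ lies inside one of the copies.

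First I would set up the dictionary between a dominating set $\mathcal{D}$ of $\mathcal{S}_{n+1}$ and a triple of ``local states'' on the three copies. For a copy $\mathcal{S}_n^\theta$ and one of its distinguished vertices $u$ (a junction or an outmost vertex of $\mathcal{S}_{n+1}$), exactly three situations can occur inside that copy: $u\in\mathcal{D}$; $u\notin\mathcal{D}$ but some neighbour of $u$ in $\mathcal{S}_n^\theta$ lies in $\mathcal{D}$; or $u\notin\mathcal{D}$ and no neighbour of $u$ in $\mathcal{S}_n^\theta$ lies in $\mathcal{D}$. In the first two cases $\mathcal{D}\cap\mathcal{S}_n^\theta$ already dominates $\mathcal{S}_n^\theta$, so its size, for a given count of distinguished vertices it contains, is at least the corresponding $\gamma_n^{\,\cdot}$; in the third case $u$ must be dominated from the other copy meeting it, and deleting $u$ leaves a dominating set of $\mathcal{S}_n^1$, $\mathcal{S}_n^2$ or $\mathcal{S}_n^3$ that avoids the neighbours of the deleted vertices, whose size is governed by $\phi_n^{\,\cdot}$, $\xi_n^{\,\cdot}$ or $\eta_n$ according as the copy is in the third case for one, two or all three of its distinguished vertices. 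The compatibility conditions are exactly that for each junction at least one of its two copies is \emph{not} in the third case (otherwise that junction would be undominated), and that for each outmost vertex of $\mathcal{S}_{n+1}$ its unique copy is in the first case, the first-or-second case, or the third case precisely when one is computing a ``$\gamma$'', a lower-indexed ``$\gamma$'', or a ``$\phi$/$\xi$/$\eta$'' quantity.

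Next comes the bookkeeping. For a fixed admissible triple of local states, the minimum size of a corresponding $\mathcal{D}$ is the sum of the three per-copy optima ($\gamma_n^{\,\cdot}$, $\phi_n^{\,\cdot}$, $\xi_n^{\,\cdot}$ or $\eta_n$) minus one for every junction that lies in $\mathcal{D}$, since such a junction is counted in the tallies of both copies that contain it; this is the source of the $-1,-2,-3$ corrections. Enumerating all admissible triples for each of the ten target quantities --- which role each copy plays and which junctions are taken into $\mathcal{D}$ --- and minimising over them yields exactly the right-hand sides of \eqref{SGDom01}--\eqref{SGDom10}. Both inequalities are then routine: ``$\le$'' by gluing together optimal pieces realising any admissible triple, ``$\ge$'' by restricting an optimal $\mathcal{D}$ of the prescribed type to the three copies and reading off its necessarily admissible triple of local states. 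The hypothesis $n\ge3$ is used to guarantee that the copies are large enough for every listed local state to be realisable, so that no term in the minima is spurious.

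The main obstacle is simply the volume of the case analysis: at each of the three distinguished vertices of each copy there are three possible local states, constrained by the compatibility conditions above, and one must check that every admissible combination appears in the relevant minimum, that no inadmissible one has slipped in (in particular that a junction is never in the third case on both sides at once), and that the additive $-1$'s are attached to exactly the configurations with a junction in $\mathcal{D}$. Organising this enumeration cleanly --- ideally through the same kind of schematic figures used for \eqref{Dom02}--\eqref{Dom05} --- is where essentially all the effort lies; once the list of configurations is pinned down, each recursion is just a minimum of sums. A secondary point needing care is the internal consistency of the definitions of $\phi_n^i$, $\xi_n^j$ and $\eta_n$: one must verify that forbidding the neighbours of the removed outmost vertices never clashes with the requirement that all remaining vertices be dominated, which again is the case for $n\ge3$.
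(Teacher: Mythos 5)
Your proposal is correct and follows essentially the same route as the paper, whose entire proof of this lemma is a pointer to Figs.~\ref{SGTheta0}--\ref{SGeta0} enumerating the admissible configurations of the three glued copies of $\mathcal{S}_n$. Your prose version — classifying each distinguished vertex of each copy into ``in $\mathcal{D}$'', ``dominated from within the copy'', or ``must be dominated from the neighbouring copy'' (the last being what forces the $\phi$, $\xi$, $\eta$ quantities), imposing the junction compatibility condition, and subtracting one per shared junction in $\mathcal{D}$ — is precisely what those figures encode, and you correctly identify both directions of the resulting equality.
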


\begin{proof}
This lemma can be proved graphically. Figs.~\ref{SGTheta0}-\ref{SGeta0} illustrate the graphical representations from Eq.~\eqref{SGDom01} to Eq.~\eqref{SGDom10}.
\end{proof}

\begin{figure}[htbp]
\centering
\includegraphics[width=\figwidth]{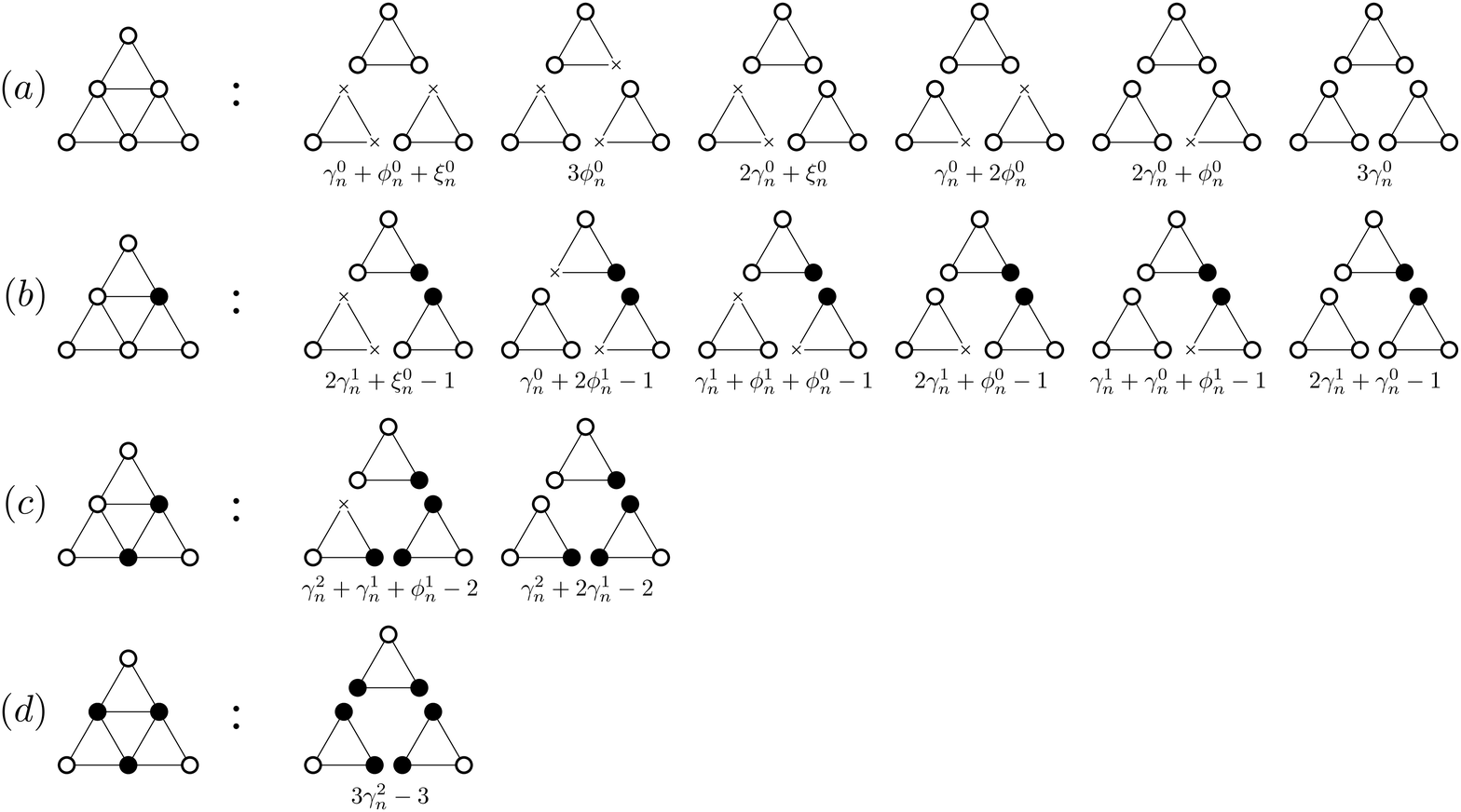}
\caption{\label{SGTheta0}Illustration of all possible configurations of dominating sets $\Omega_{n+1}^0$ of $\mathcal{S}_{n+1}$, which contain $\Theta_{n+1}^0$. Only the outmost vertices of $\mathcal{S}_{n}^{\theta}$, $\theta=1,2,3$, are shown. Solid vertices are in the dominating sets, open vertices are not, while cross vertices denote those removed outmost vertices in $\mathcal{S}_{n}^{\theta}$.}
\end{figure}

\begin{figure}[htbp]
\centering
\includegraphics[width=\figwidth]{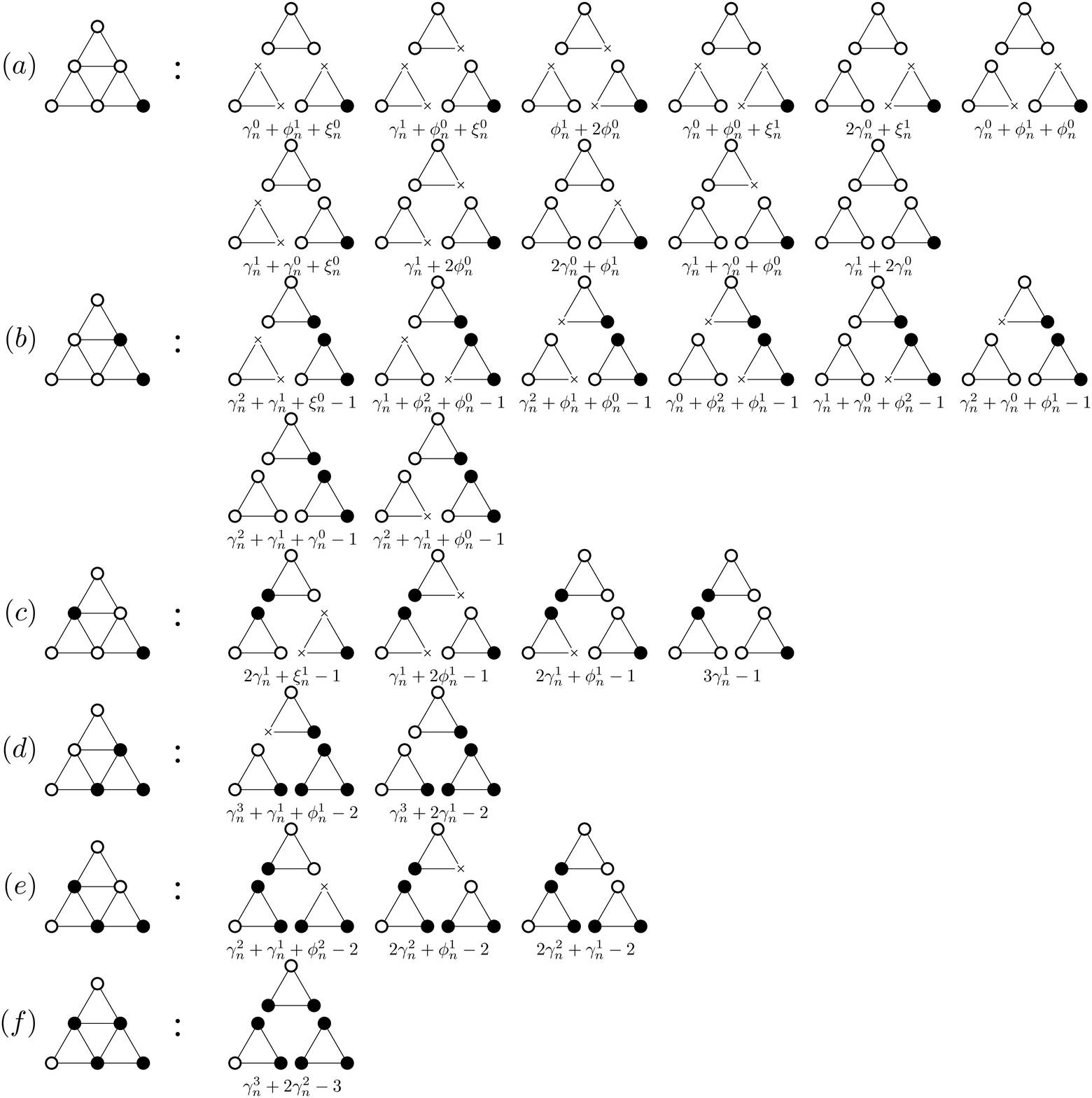}
\caption{\label{SGTheta1}Illustration of all possible configurations of dominating sets $\Omega_{n+1}^1$ of $\mathcal{S}_{n+1}$, which contain $\Theta_{n+1}^1$.}
\end{figure}

\begin{figure}[htbp]
\centering
\includegraphics[width=\figwidth]{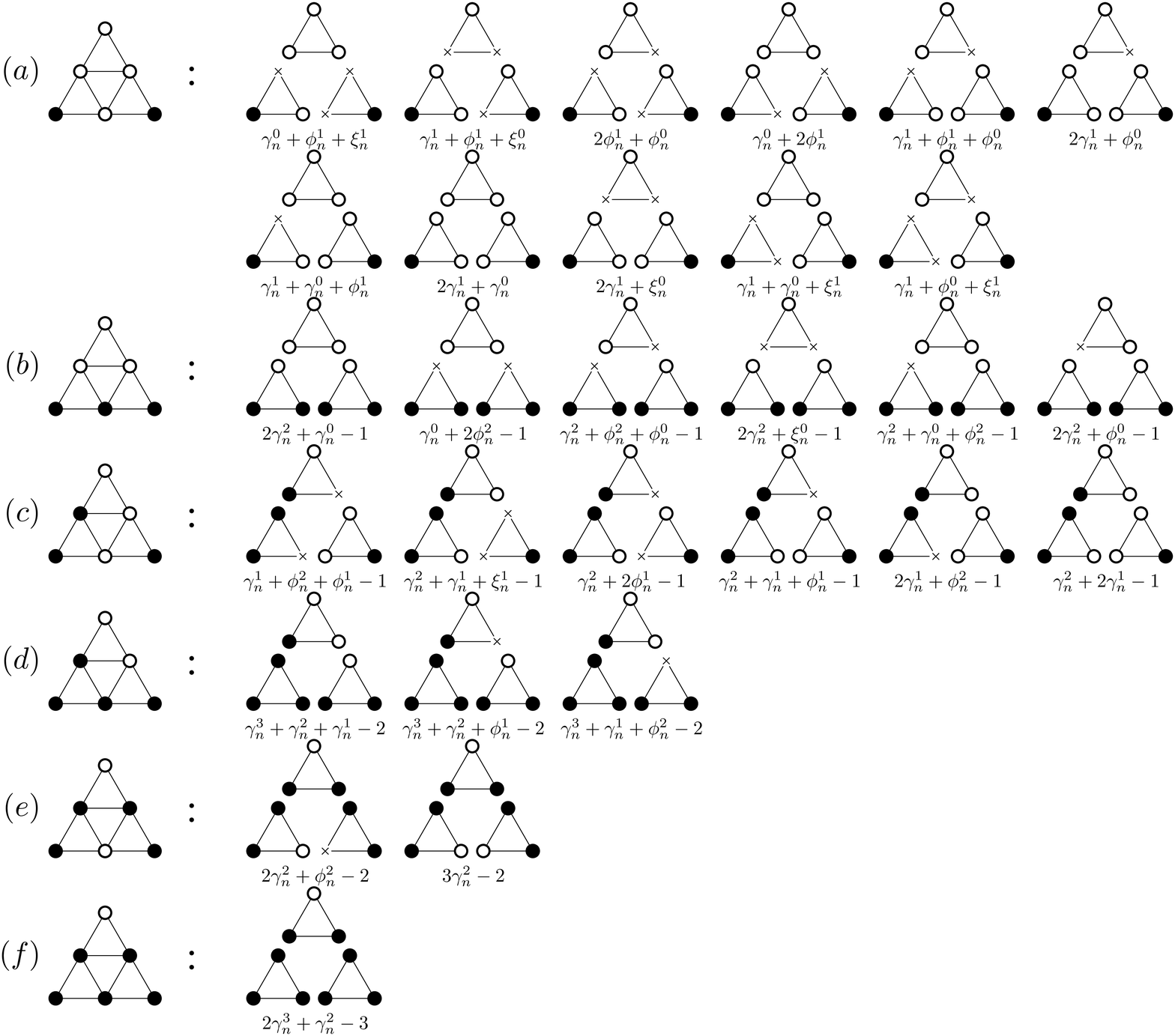}
\caption{\label{SGTheta2} Illustration of all possible configurations of dominating sets $\Omega_{n+1}^2$ of $\mathcal{S}_{n+1}$, which contain $\Theta_{n+1}^2$.}
\end{figure}

\begin{figure}[htbp]
\centering
\includegraphics[width=\figwidth]{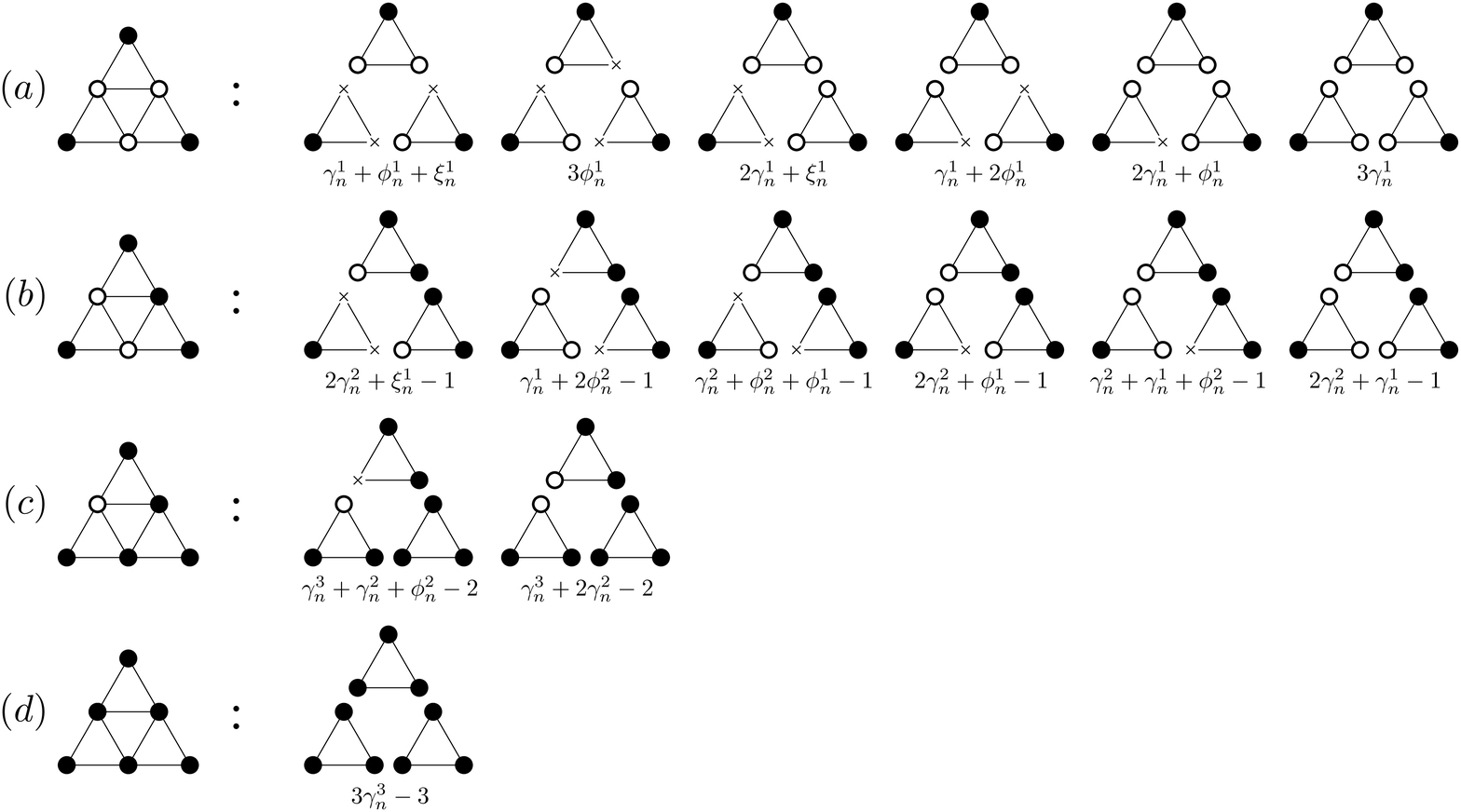}
\caption{\label{SGTheta3}Illustration of all possible configurations of dominating sets $\Omega_{n+1}^3$ of $\mathcal{S}_{n+1}$, which contain $\Theta_{n+1}^3$.}
\end{figure}

\begin{figure}[htbp]
\centering
\includegraphics[width=\figwidth]{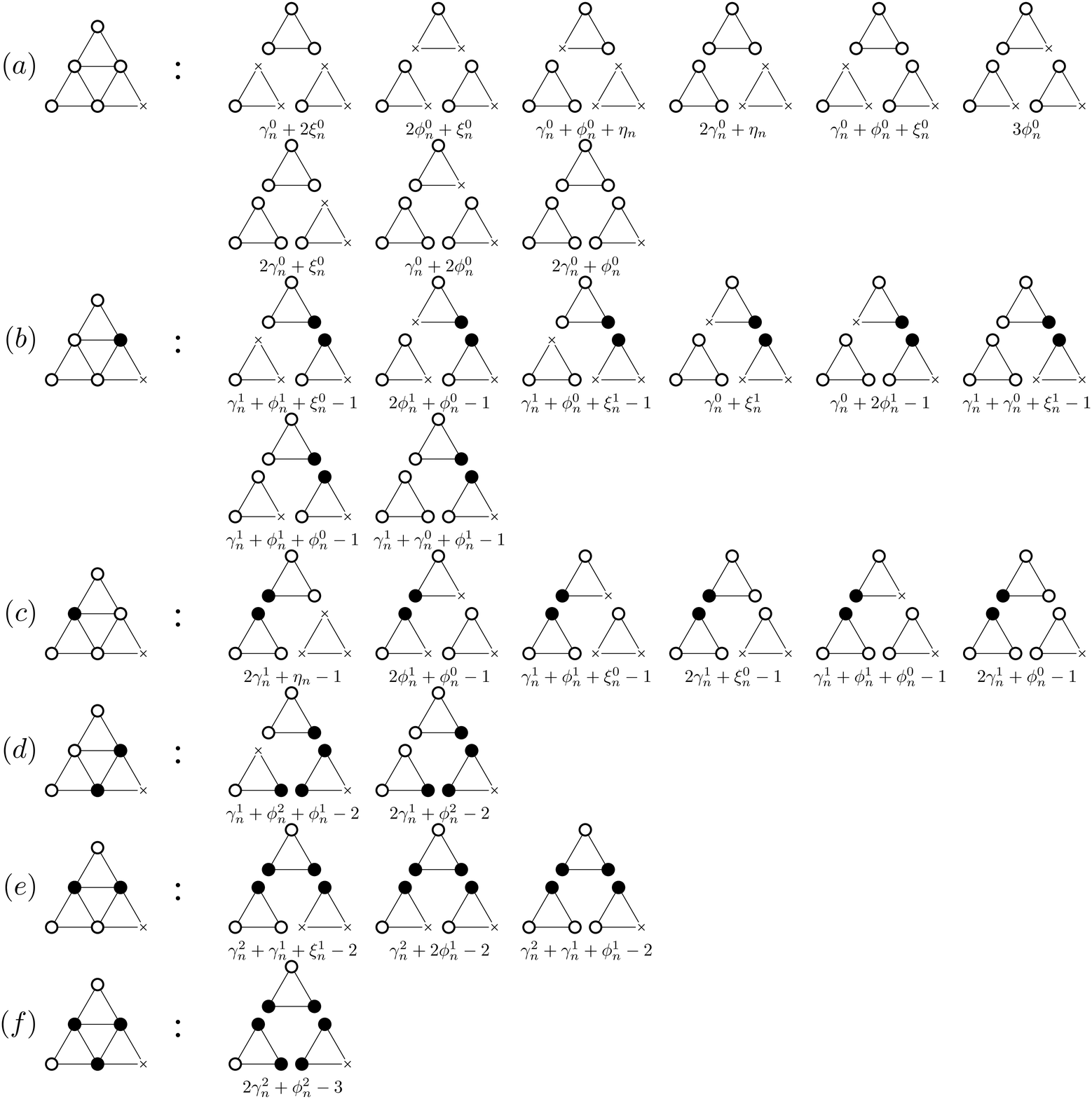}
\caption{\label{SGphi0}Illustration of all possible configurations of dominating sets of $\mathcal{S}_{n+1}^1$, which contain those sets with minimum cardinality  $\phi_{n+1}^0$.}
\end{figure}

\begin{figure}[htbp]
\centering
\includegraphics[width=\figwidth]{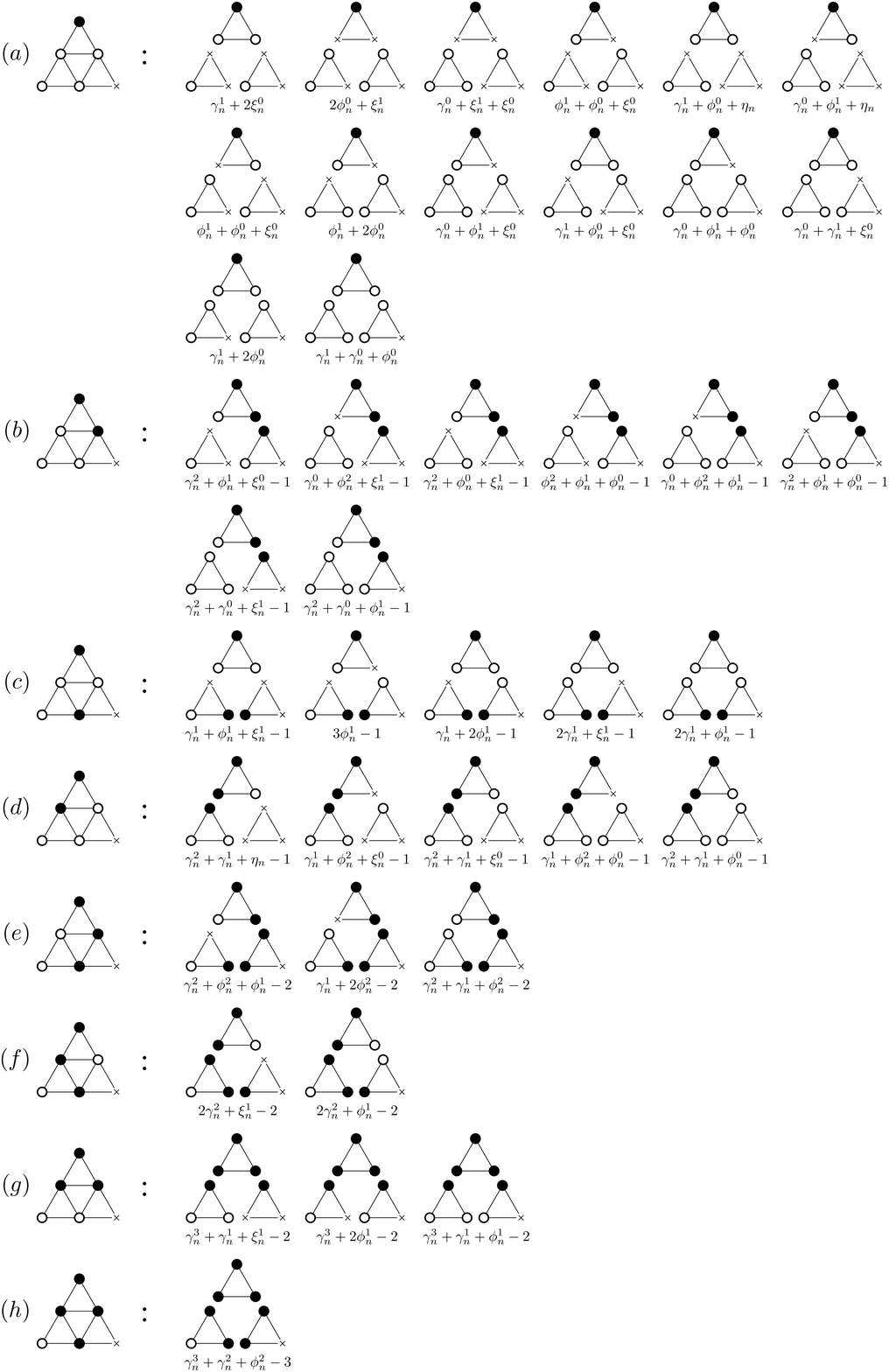}
\caption{\label{SGphi1}Illustration of all possible configurations of dominating sets of $\mathcal{S}_{n+1}^1$, which contain those sets with minimum cardinality  $\phi_{n+1}^1$.}
\end{figure}

\begin{figure}[htbp]
\centering
\includegraphics[width=\figwidth]{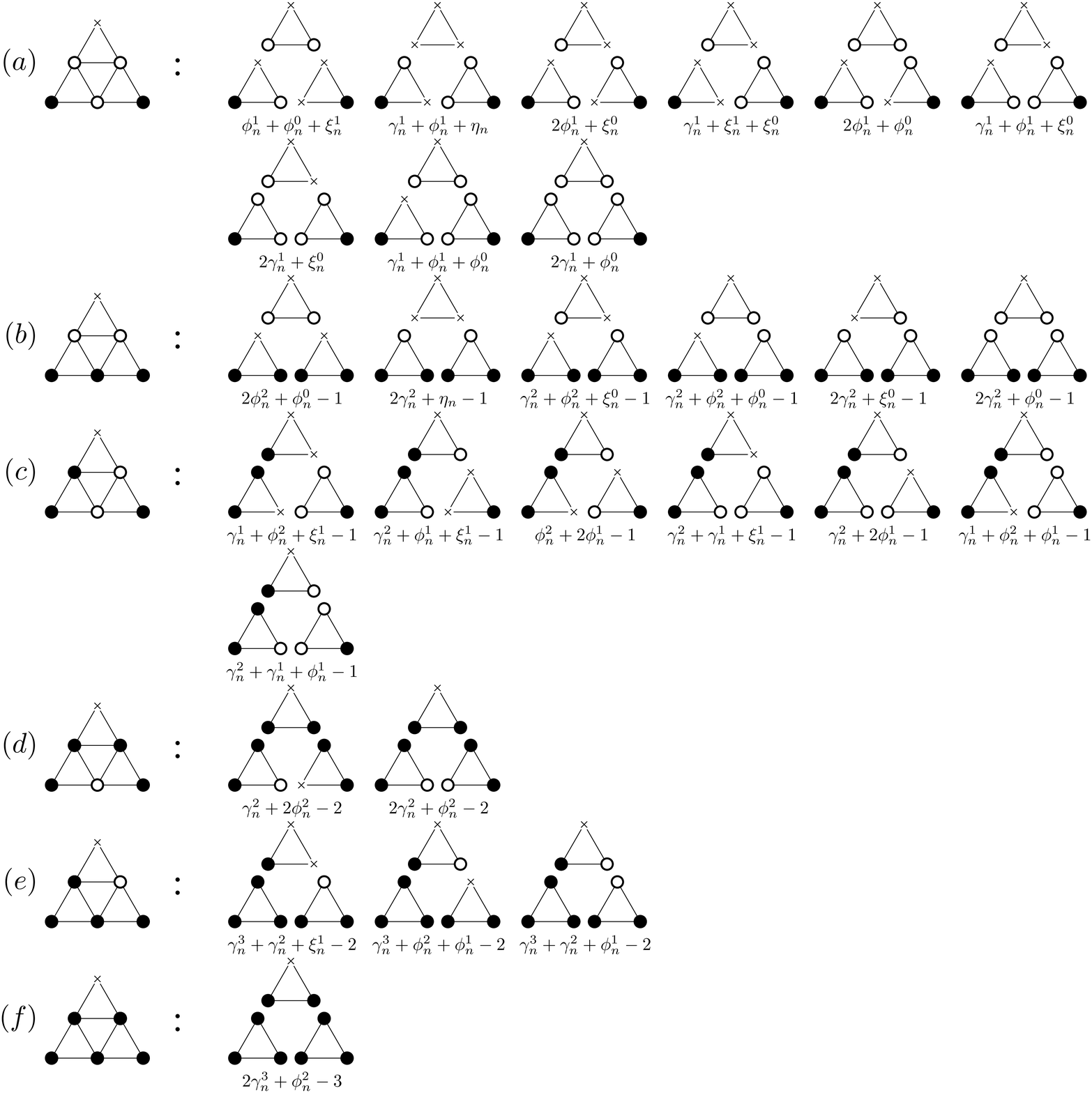}
\caption{\label{SGphi1}Illustration of all possible configurations of dominating sets of $\mathcal{S}_{n+1}^1$, which contain those sets with minimum cardinality $\phi_{n+1}^2$.}
\end{figure}

\begin{figure}[htbp]
\centering
\includegraphics[width=\figwidth]{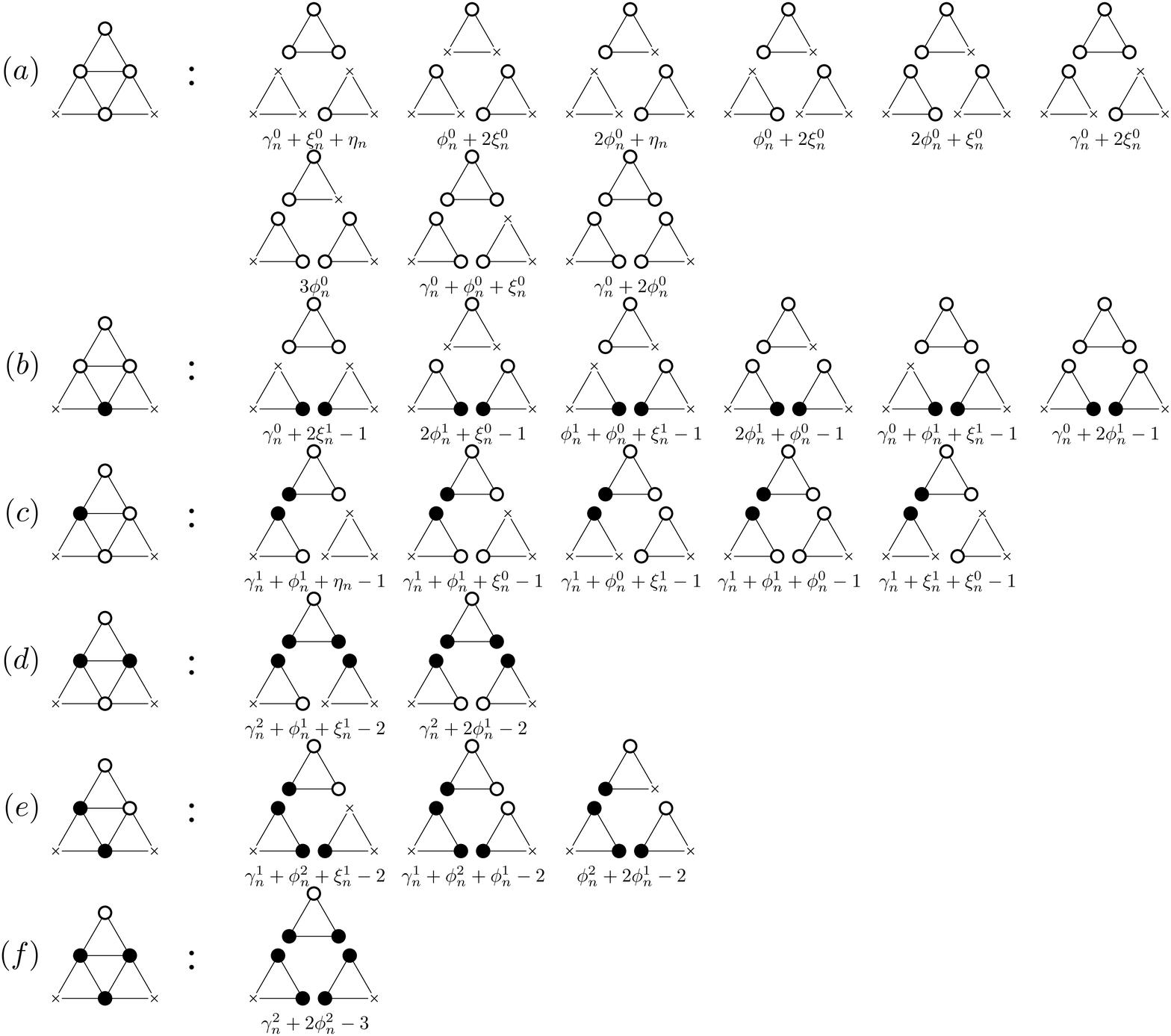}
\caption{\label{SGxi0}Illustration of all possible configurations of dominating sets of $\mathcal{S}_{n+1}^2$, which contain those sets with minimum cardinality $\xi_{n+1}^0$.}
\end{figure}

\begin{figure}[htbp]
\centering
\includegraphics[width=\figwidth]{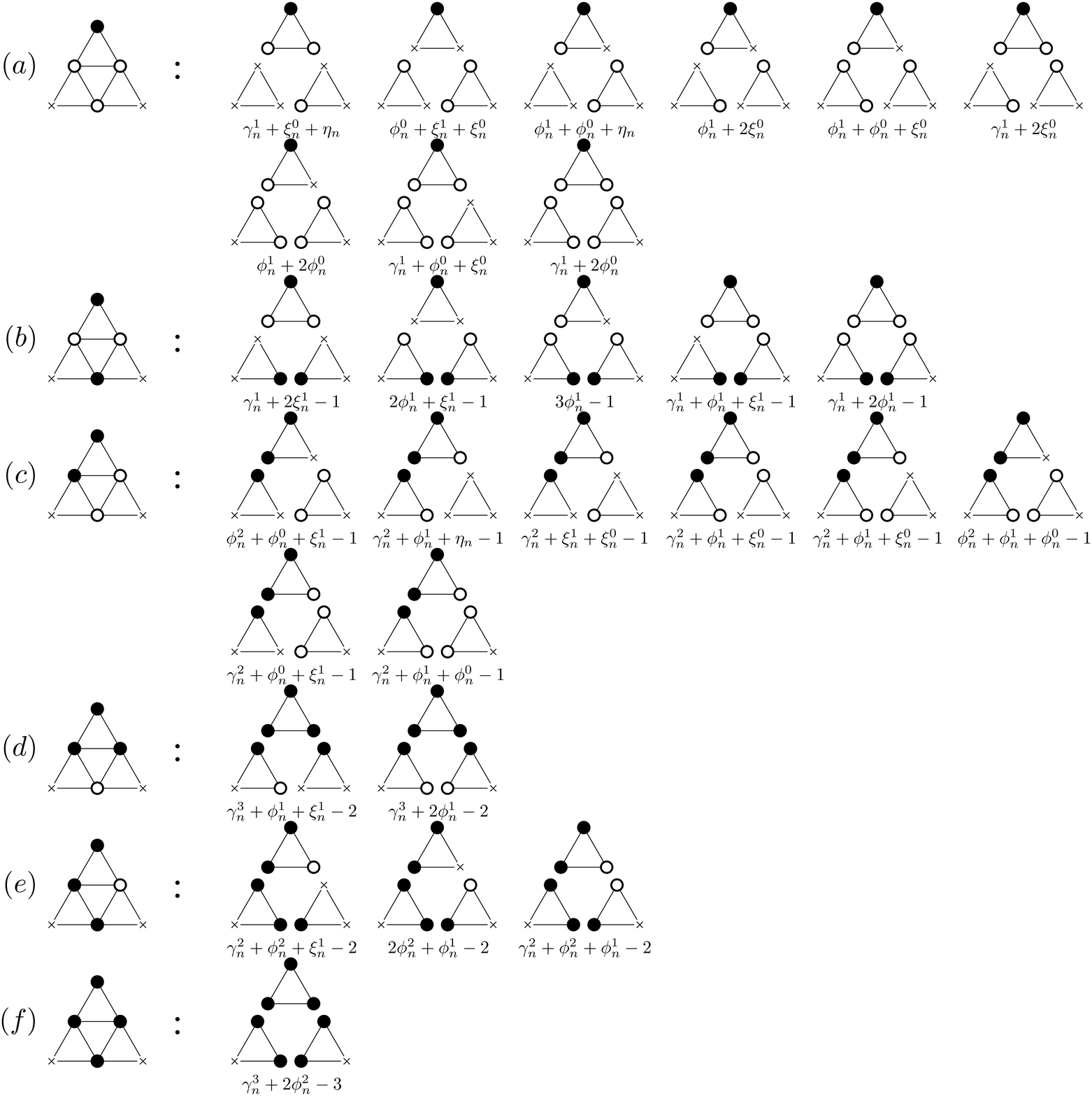}
\caption{\label{SGxi1}Illustration of all possible configurations of dominating sets of $\mathcal{S}_{n+1}^2$, which contain those sets with minimum cardinality $\xi_{n+1}^1$.}
\end{figure}

\begin{figure}[htbp]
\centering
\includegraphics[width=\figwidth]{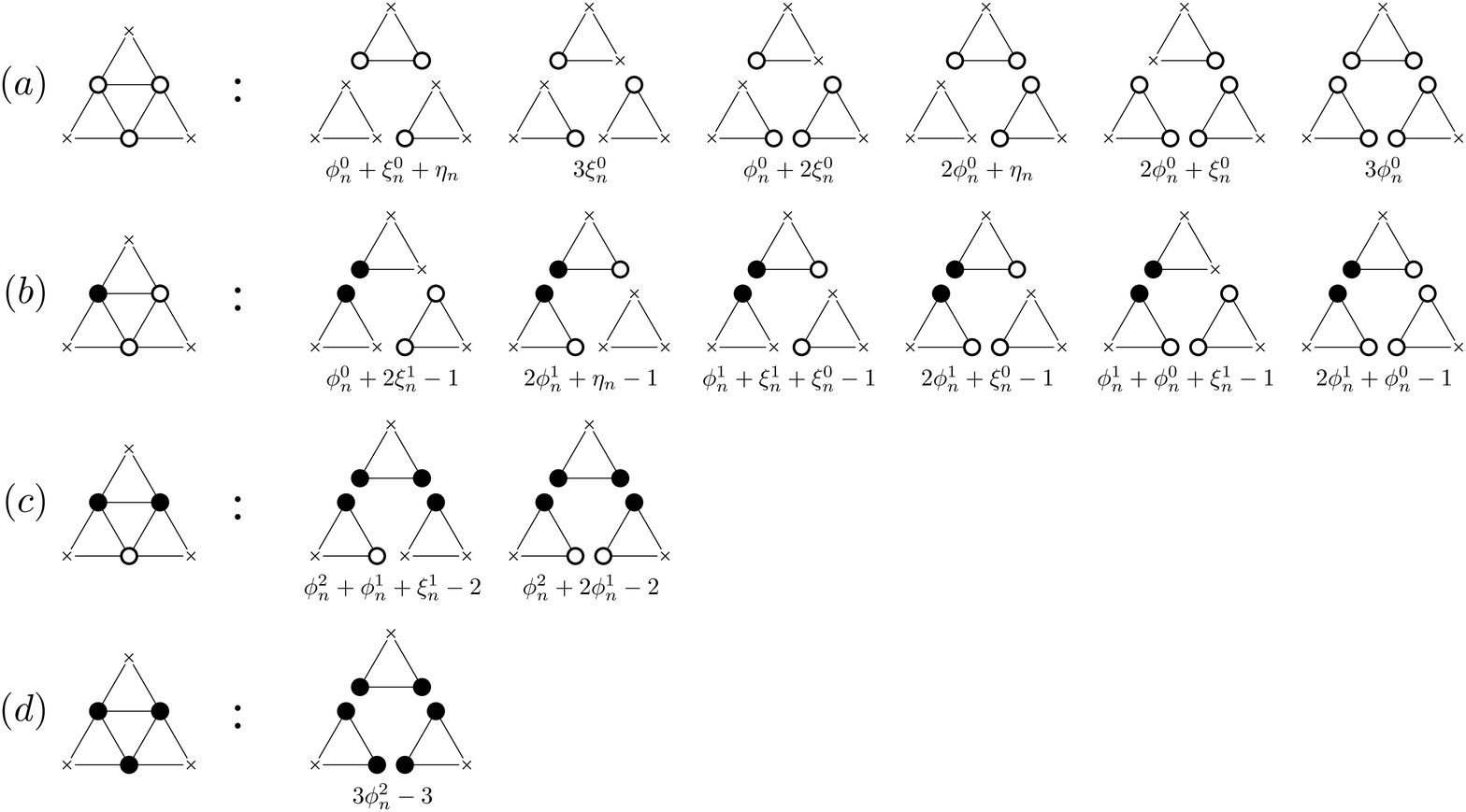}
\caption{\label{SGeta0}Illustration of all possible configurations of dominating sets of $\mathcal{S}_{n+1}^3$, which contain those sets with minimum cardinality $\eta_{n+1}$.}
\end{figure}



\begin{lemma}\label{leSGDom03}
For arbitrary $n \geq 3$,
\begin{align}\label{SGset01}
\gamma_n^1&= \gamma_n^0 +1\,, \nonumber \\
\gamma_n^2&=\gamma_n^0+2\,, \nonumber \\
\gamma_n^3&= \gamma_n^0+2\,, \nonumber \\
\phi_n^0 &= \gamma_n^0\,, \nonumber \\
\phi_n^1 &= \gamma_n^0+1\,, \nonumber \\
\phi_n^2 &= \gamma_n^0+1\,, \nonumber \\
\xi_n^0 &= \gamma_n^0\,, \nonumber \\
\xi_n^1 &= \gamma_n^0+1\,, \nonumber \\
\eta_n &= \gamma_n^0\,.  \nonumber
\end{align}
\end{lemma}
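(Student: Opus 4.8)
The natural route is induction on $n\ge 3$, using the ten simultaneous recursions of Lemma~\ref{leSGDom02} as the engine of the inductive step. For the base case $n=3$ I would evaluate all ten quantities directly on $\mathcal{S}_3$: this graph has only $N_3=15$ vertices, and each of $\gamma_3^0,\gamma_3^1,\gamma_3^2,\gamma_3^3$ (resp. $\phi_3^0,\phi_3^1,\phi_3^2$, resp. $\xi_3^0,\xi_3^1$, resp. $\eta_3$) is the minimum size of a dominating set of $\mathcal{S}_3$ (resp. of $\mathcal{S}_3^1$, $\mathcal{S}_3^2$, $\mathcal{S}_3^3$) subject to the stated constraints on outmost vertices and on the neighbours of the removed outmost vertices. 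Each of these numbers is pinned down by exhibiting one admissible set of the claimed size and then checking, by a short case analysis on which vertices of a few ``bottleneck'' cells could be left undominated, that no admissible set is smaller; from the resulting values one reads off that the nine identities hold at $n=3$.

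For the inductive step, assume the nine identities hold at level $n$ and set $g:=\gamma_n^0$. The induction hypothesis says each of the ten level-$n$ quantities equals either $g$ (the four quantities $\gamma_n^0,\phi_n^0,\xi_n^0,\eta_n$), or $g+1$ (the quantities $\gamma_n^1,\phi_n^1,\phi_n^2,\xi_n^1$), or $g+2$ (the quantities $\gamma_n^2,\gamma_n^3$). Substituting into the right-hand sides of Eqs.~\eqref{SGDom01}--\eqref{SGDom10}, every expression inside a $\min\{\cdots\}$ is a sum of three level-$n$ quantities with a constant from $\{0,1,2,3\}$ subtracted, hence equals $3g+e-c$ for explicit small integers $e\in\{0,\dots,6\}$ (the total ``excess'' of the three summands over $g$) and $c\in\{0,1,2,3\}$. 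Thus each recursion collapses to $3g$ plus the minimum of a finite list of integers, a purely arithmetical check. Carrying it out gives $\gamma_{n+1}^0=\phi_{n+1}^0=\xi_{n+1}^0=\eta_{n+1}=3g$, then $\gamma_{n+1}^1=\phi_{n+1}^1=\phi_{n+1}^2=\xi_{n+1}^1=3g+1$, and $\gamma_{n+1}^2=\gamma_{n+1}^3=3g+2$; since $\gamma_{n+1}^0=3g$, these are exactly the nine identities at level $n+1$, closing the induction.

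All of the substance is in the bookkeeping, and that is also where the main obstacle lies: Eqs.~\eqref{SGDom01}--\eqref{SGDom10} list on the order of two hundred terms, and for each recursion one must (i) confirm that some listed term attains the claimed value $3g$, $3g+1$ or $3g+2$, and, more delicately, (ii) confirm that no listed term is strictly smaller. Part (ii) is what makes the computation genuinely necessary rather than a matter of guessing the minimiser; it is cleanest to handle it by recording for each term the value $e-c$ and checking it never falls below the relevant offset ($0$, $1$, or $2$), which in practice restricts scrutiny to the terms that pair the small-excess quantities $\gamma_n^0,\phi_n^0,\xi_n^0,\eta_n$ with a subtracted $2$ or $3$, together with the ``$-1$'' terms built entirely from those four. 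A secondary caveat is that the recursions of Lemma~\ref{leSGDom02} are asserted only for $n\ge 3$, so the base case must be taken at $n=3$ (not $n=1$ or $n=2$), and one should genuinely verify, rather than assume, that $\mathcal{S}_3$ already realises the stated pattern.
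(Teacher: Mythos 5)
Your proposal is correct and follows essentially the same route as the paper: induction on $n$ with base case $n=3$ computed by hand and the inductive step obtained by substituting the induction hypothesis into the recursions of Lemma~\ref{leSGDom02} and minimizing. The only difference is that you spell out the arithmetic bookkeeping (each term collapsing to $3g+e-c$) that the paper dismisses as ``not difficult to check,'' which is a faithful elaboration rather than a different argument.
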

\begin{proof}
We prove this lemma by induction. \par For $n = 3$, it is easy to obtain by hand that $\gamma_3^0 =3$, $\gamma_3^1 = 4$, $\gamma_3^2=5$, $\gamma_3^3=5$, $\phi_3^0=3$, $\phi_3^1 = 4$, $\phi_3^2=4$, $\xi_3^0=3$, $\xi_3^1=4$, and $\eta_3=3$. Thus, the result is true for $n = 3$. \par
Let us assume that the lemma holds for $n = t$.  For $n =t+1$, by
induction assumption and lemma \ref{leSGDom02}, it is not difficult to check that the result is true for $n =t+1$. 
\end{proof}

\begin{theorem}
The domination number of the Sierpi\'nski graph $\mathcal{S}_n$, $n\geq3$, is $\gamma_n = 3^{n-2}$.
\end{theorem}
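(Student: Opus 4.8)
The plan is to collapse the four-term minimum of Lemma~\ref{leSGDom01} to the single quantity $\gamma_n^0$, extract a one-step recursion for $\gamma_n^0$, and solve it. First I would invoke Lemma~\ref{leSGDom03}: for $n\geq 3$ it gives $\gamma_n^1=\gamma_n^0+1$ and $\gamma_n^2=\gamma_n^3=\gamma_n^0+2$, so $\gamma_n^0$ is strictly the smallest of $\gamma_n^0,\gamma_n^1,\gamma_n^2,\gamma_n^3$. By Lemma~\ref{leSGDom01} this forces $\gamma_n=\gamma_n^0$ for every $n\geq 3$, and the problem reduces to computing $\gamma_n^0$.

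Next I would substitute the relevant identities of Lemma~\ref{leSGDom03} into the right-hand side of Eq.~\eqref{SGDom01}. Writing $g:=\gamma_n^0$ and using $\phi_n^0=\xi_n^0=g$, $\gamma_n^1=\phi_n^1=g+1$, $\gamma_n^2=g+2$, the first six arguments of the $\min$ in Eq.~\eqref{SGDom01} --- namely $\gamma_n^0+\phi_n^0+\xi_n^0$, $3\phi_n^0$, $2\gamma_n^0+\xi_n^0$, $\gamma_n^0+2\phi_n^0$, $2\gamma_n^0+\phi_n^0$, $3\gamma_n^0$ --- all equal $3g$, while the nine remaining arguments evaluate to $3g+1$, $3g+2$, or $3g+3$. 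Hence $\gamma_{n+1}^0=3\gamma_n^0$ for all $n\geq 3$. Coupling this with the base value $\gamma_3^0=3$ recorded in the proof of Lemma~\ref{leSGDom03}, the recursion is solved by $\gamma_n^0=3^{n-2}$, and therefore $\gamma_n=\gamma_n^0=3^{n-2}$ for $n\geq 3$.

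The only real obstacle is the bookkeeping in the substitution step: one has to evaluate all fifteen arguments of the minimum in Eq.~\eqref{SGDom01} and verify that none of the ``$-1$'', ``$-2$'', ``$-3$'' corrections pushes a term below $3g$. Because $g\geq 3$ this is immediate term by term once each is written out, but it is the one place where an arithmetic slip would matter, so I would also cross-check the conclusion against the directly computed small values $\gamma_3=3$ and $\gamma_4=9$.
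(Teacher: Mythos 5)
Your proposal is correct and follows the same route as the paper: both use Lemma~\ref{leSGDom03} to identify $\gamma_n^0$ as the minimizer in Lemma~\ref{leSGDom01}, substitute into Eq.~\eqref{SGDom01} to get $\gamma_{n+1}^0=3\gamma_n^0$, and solve with $\gamma_3=3$. You simply spell out the term-by-term evaluation of the fifteen arguments of the minimum, which the paper leaves implicit.
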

\begin{proof}
According  to Lemmas~\ref{leSGDom01}, \ref{leSGDom02}, and~\ref{leSGDom03}, we obtain  $\gamma_{n+1} = \gamma_{n+1}^0 = 3\gamma_n^0 = 3 \gamma_n$. Considering $\gamma_3 = 3$, it is obvious that $\gamma_n = 3^{n-2}$ holds for all $n \geq 3$.
\end{proof}

\subsection{Number of minimum dominating sets}

Let $a_n$ denote the number of minimum dominating sets of the Sierpi\'nski graph $\mathcal{S}_{n}$, $b_n$ the number of minimum dominating sets of  $\mathcal{S}_{n}^1$ containing no outmost vertices, $c_n$ the number of minimum dominating sets of  $\mathcal{S}_{n}^2$ including no outmost vertices,  $d_n$ the number of minimum dominating sets of  $\mathcal{S}_{n}^3$,  $e_n$ the number of minimum dominating sets of $\mathcal{S}_{n}^1$ containing two outmost vertices.

\begin{theorem}
For $n\geq 3$, the five quantities $a_n$, $b_n$, $c_n$, $d_n$ and $e_n$ can be obtained recursively according to the following relations.
\begin{equation}\label{SGset02}
a_{n+1} = 6a_nb_nc_n+2b_n^3+3a_n^2c_n+9a_nb_n^2+6a_n^2b_n+a_n^3,
\end{equation}
\begin{align}\label{SGset03}
b_{n+1} &= 2a_nc_n^2+4b_n^2c_n+2a_nb_nd_n+a_n^2d_n+8a_nb_nc_n+ \nonumber\\
&\quad 3b_n^3+2a_n^2c_n+4a_nb_n^2+a_n^2b_n,
\end{align}
\begin{align}\label{SGset04}
c_{n+1} &= 2a_nc_nd_n+4b_nc_n^2+2b_n^2d_n+7b_n^2c_n+3a_nc_n^2+\nonumber\\
&\quad 2b_n^3+2a_nb_nd_n+4a_nb_nc_n+a_nb_n^2,
\end{align}
\begin{equation}\label{SGset05}
d_{n+1} = 6b_nc_nd_n+c_n^3+9b_nc_n^2+3b_n^2d_n+6b_n^2c_n+b_n^3+e_n^3,
\end{equation}
\begin{equation}\label{SGset06}
e_{n+1} = b_ne_n^2,
\end{equation}
with the initial condition $a_3=2$, $b_3=3$, $d_3=2$, $d_3=1$ and $e_3=1$.
\end{theorem}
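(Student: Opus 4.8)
The plan is to mirror the structure used for the domination-number recursions, but now tracking \emph{multiplicities} rather than minimum cardinalities. The key observation is that by Lemma~\ref{leSGDom03} all nine of the relevant parameters are pinned to $\gamma_n^0$ up to an additive constant, so in each of the ten minimization relations of Lemma~\ref{leSGDom02} we know exactly which of the listed terms achieve the minimum. For instance, in Eq.~\eqref{SGDom01} for $\gamma_{n+1}^0$, substituting $\gamma_n^1=\gamma_n^0+1$, $\gamma_n^2=\gamma_n^0+2$, $\phi_n^0=\gamma_n^0$, $\phi_n^1=\gamma_n^0+1$ shows that the minimum value $3\gamma_n^0$ is attained precisely by the terms $3\phi_n^0$, $2\gamma_n^0+\phi_n^0$ (wait---that is $3\gamma_n^0$ only if read correctly), $3\gamma_n^0$ itself, and the ``$-1$'' and ``$-2$'' corrected terms $2\gamma_n^1+\xi_n^0-1$, $\gamma_n^1+\gamma_n^0+\phi_n^1-1$, etc.; one reads off from the corresponding figure (Fig.~\ref{SGTheta0}) which gluings of sub-configurations these correspond to, and multiplies the counts. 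First I would, for each of $a,b,c,d,e$, go through the matching figure among Figs.~\ref{SGTheta0}--\ref{SGeta0}, identify the subset of configurations that are actually optimal (using Lemma~\ref{leSGDom03}), and for each such configuration write the product of the appropriate counting variables, being careful about the identification of shared outmost vertices.

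Second, I would handle a subtlety hidden in this bookkeeping: a configuration contributes to, say, $a_{n+1}$ only if the union of the three chosen sub-dominating-sets is genuinely a dominating set of $\mathcal{S}_{n+1}$ \emph{and} the choices are independent across the three copies. Independence across copies is automatic because the three replicas of $\mathcal{S}_n$ overlap only at outmost vertices, whose membership status is fixed by the configuration; this is exactly why each term is a product. However, one must check there is no overcounting when two distinct configurations in the figure yield the same global set---this cannot happen here because configurations are distinguished by the membership pattern of the three outmost vertices of $\mathcal{S}_{n+1}$ together with the (pattern-forced) internal types, so distinct configurations give distinct sets. A further point: for the $\phi$, $\xi$, $\eta$ quantities the counting variables $b_n,c_n,d_n,e_n$ are defined to count dominating sets of the \emph{cut} graphs $\mathcal{S}_n^k$ with prescribed outmost-vertex behavior, so I must confirm that the sub-pieces appearing in the figures are exactly of those types (e.g. a corner piece whose removed outmost vertex's neighbor is excluded corresponds to a $\phi$- or $\xi$-type count, not a $\gamma$-type). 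This matching is routine but is where an error would most likely creep in.

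Third, with the optimal configurations identified, the recursions fall out by pure addition of products, giving exactly Eqs.~\eqref{SGset02}--\eqref{SGset06}; the coefficients $6,2,3,9,6,1$ and so on are just the numbers of optimal configurations of each combinatorial shape (accounting for the three rotational placements of an asymmetric pattern, which is where factors like $3$ and $6$ come from). The initial conditions at $n=3$ are verified by the same hand computation already used in the proof of Lemma~\ref{leSGDom03}: one lists the MDSs of $\mathcal{S}_3$, $\mathcal{S}_3^1$, $\mathcal{S}_3^2$, $\mathcal{S}_3^3$ explicitly and counts, yielding $a_3=2$, $b_3=3$, $c_3=1$, $d_3=2$, $e_3=1$ (note the statement's ``$d_3=2$, $d_3=1$'' is a typo for $c_3$ and $d_3$).

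The main obstacle is not any single deep step but the sheer bookkeeping: correctly enumerating, for each of the five quantities, which configurations among the dozens drawn in Figs.~\ref{SGTheta0}--\ref{SGeta0} are optimal after the substitution from Lemma~\ref{leSGDom03}, and correctly typing each sub-piece as a $\gamma$-, $\phi$-, $\xi$-, or $\eta$-configuration so that the right counting variable is used. I expect the $b_{n+1}$ and $c_{n+1}$ relations to be the most delicate, since they involve mixed types ($a_nc_n^2$, $a_nb_nd_n$, etc.) and the $\eta$-type piece appears, so the interaction of two cut edges in one sub-replica with one cut edge in another must be tracked carefully; the $e_{n+1}=b_ne_n^2$ relation, by contrast, is short because only one family of configurations survives optimality. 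Once the figures are read correctly, verifying the stated recursions is mechanical.
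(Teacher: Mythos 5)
Your proposal takes essentially the same route as the paper: the authors likewise observe (via Lemma~\ref{leSGDom03}) that every minimum dominating set of $\mathcal{S}_{n+1}$ contains no outmost vertices, read the optimal configurations off Figs.~\ref{SGTheta0}--\ref{SGeta0} after substituting the values from that lemma, and obtain the numerical coefficients from the rotational symmetry of the Sierpi\'nski graph, so your more detailed bookkeeping is a faithful expansion of their (very terse) argument. One small slip: in repairing the typo in the stated initial conditions you interchange two values --- the paper's Table~\ref{SetNo} gives $c_3=2$ and $d_3=1$, not $c_3=1$ and $d_3=2$.
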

\begin{proof}
We first prove Eq.~\eqref{SGset02}. Note that $a_n$ is actually is the number of different minimum dominating sets for $\mathcal{S}_n$, each of which contains no outmost vertices.
Then, according to Lemma~\ref{leSGDom03} and Fig.~\ref{SGTheta0}, we can establish Eq.~\eqref{SGset02} by using the rotational symmetry of the Sierpi\'nski graph. \par

The remaining equations~\eqref{SGset03}-\eqref{SGset06} can be proved similarly.
\end{proof}

Applying Eqs.~\eqref{SGset02}-\eqref{SGset06}, the values of $a_n$, $b_n$, $c_n$, $d_n$ and $e_n$ can be recursively obtained for small $n$ as listed in Table~\ref{SetNo}, which shows that these quantities grow exponentially with $n$.

\begin{table*}
\caption{The first several values of $a_n$, $b_n$, $c_n$, $d_n$ and $e_n$.}\label{SetNo}
\normalsize
\centering
\begin{small}
\begin{tabular}{|c|c|c|c|c|c|c|}
\hline
\raisebox{-0.5ex}{n} & \raisebox{-0.5ex}{1} & \raisebox{-0.5ex}{2} & \raisebox{-0.5ex}{3} & \raisebox{-0.5ex}{4} & \raisebox{-0.5ex}{5} & \raisebox{-0.5ex}{6} \\[0.5ex]
\hline
\hline
\raisebox{-0.5ex}{$N_n$} & \raisebox{-0.5ex}{3} & \raisebox{-0.5ex}{6} & \raisebox{-0.5ex}{15} & \raisebox{-0.5ex}{42} & \raisebox{-0.5ex}{123} & \raisebox{-0.5ex}{366} \\[0.5ex]
\hline
\raisebox{-0.5ex}{$a_n$} & \raisebox{-0.5ex}{3} & \raisebox{-0.5ex}{6} & \raisebox{-0.5ex}{2} & \raisebox{-0.5ex}{392} & \raisebox{-0.5ex}{1,517,381,906} & \raisebox{-0.5ex}{84,494,691,003,170,101,058,068,575,600}\\[0.5ex]
\hline
\raisebox{-0.5ex}{$b_n$} & \raisebox{-0.5ex}{0} & \raisebox{-0.5ex}{1} & \raisebox{-0.5ex}{3} & \raisebox{-0.5ex}{381} & \raisebox{-0.5ex}{1,435,406,927} & \raisebox{-0.5ex}{79,618,813,236,624,661,173,376,634,785}\\[0.5ex]
\hline
\raisebox{-0.5ex}{$c_n$} & \raisebox{-0.5ex}{0} & \raisebox{-0.5ex}{0} & \raisebox{-0.5ex}{2} & \raisebox{-0.5ex}{356} & \raisebox{-0.5ex}{1,357,582,404} & \raisebox{-0.5ex}{75,023,197,813,382,628,339,656,804,330}\\[0.5ex]
\hline
\raisebox{-0.5ex}{$d_n$} & \raisebox{-0.5ex}{0} & \raisebox{-0.5ex}{0} & \raisebox{-0.5ex}{1} & \raisebox{-0.5ex}{315} & \raisebox{-0.5ex}{1,238,595,209} & \raisebox{-0.5ex}{68,189,726,461,267,338,496,884,215,735}\\[0.5ex]
\hline
\raisebox{-0.5ex}{$e_n$} & \raisebox{-0.5ex}{0} & \raisebox{-0.5ex}{1} & \raisebox{-0.5ex}{1} & \raisebox{-0.5ex}{3} & \raisebox{-0.5ex}{3429} & \raisebox{-0.5ex}{16,877,573,499,350,007}\\[0.5ex]
\hline
\end{tabular}
\end{small}
\end{table*}

\section{Comparison and analysis}

In the preceding two sections, we studied the minimum dominating set problem for the pseudofracal scale-free web and the Sierpi\'nski graph with identical number of vertices and edges. For both networks, we determined the domination number and the number of minimum dominating sets. From the obtained results, we can see that the domination number of the pseudofracal scale-free web is only half of that corresponding to the Sierpi\'nski graph. However, there exists only one minimum dominating set in pseudofracal scale-free web, while the number of different minimum dominating sets grows fast with the number of vertices. Because the size and number of minimum dominating sets are closely related to network structure, we argue that this distinction highlights the structural disparity between the pseudofracal scale-free web and the Sierpi\'nski graph.

The above-observed difference of minimum dominating sets between the pseudofracal scale-free web and the Sierpi\'nski graph can be easily understood. The pseudofracal scale-free web is heterogeneous, in which there are high-degree vertices that are connected to each other and are also linked to other small-degree vertices in the network. These high-degree vertices are very efficient in dominating the whole network. Thus, to construct a minimum dominating set, we will try to select high-degree vertices instead of small-degree vertices, which substantially lessens the number of minimum dominating sets. Quite different from the pseudofracal scale-free web, the Sierpi\'nski graph is homogeneous, all its vertices, excluding the three outmost ones, have the same degree of four and thus play a similar role in selecting a minimum dominating set. That's why the number of minimum dominating sets in the Sierpi\'nski graph is much more than that in the pseudofracal scale-free web. Therefore, the difference of minimum dominating sets between the Sierpi\'nski graph and the pseudofracal scale-free web lies in the inhomogeneous topology of the latter.

We note that although we only study the minimum dominating sets in a specific  scale-free  network, the domination number of which is considerably small taking up about one-sixth of the number of all vertices in the network. Since the scale-free property appears in a large variety of realistic networks, it is expected that the domination number of realistic scale-free networks is also much less, which is qualitatively similar to that of the pseudofracal scale-free web. Recent empirical study reported that in many real scale-free networks, and the domination number is smaller than the network size~\cite{NaAk12,NaAk13}.




\section{Conclusions}

To conclude, we studied the size and the number of minimum dominating sets in the pseudofracal scale-free web and the Sierpi\'nski graph, which have the same number of vertices and edges. For both networks, by using their self-similarity we determined the explicit expressions for the domination number, which for the former network is only one-half of that for the latter network. We also demonstrated that the minimum dominating set for the pseudofracal scale-free web is unique, while the number of minimum dominating sets in the Sierpi\'nski graph grows exponentially with  the number of vertices in the whole graph. The difference of the minimum dominating sets in the two considered networks is rooted in the their architecture disparity: the pseudofracal scale-free web is heterogeneous, while the Sierpi\'nski graph is homogeneous. Our work provides useful insight into applications of minimum dominating sets in scale-free networks.

\section*{Acknowledgements}

This work is supported by the National Natural Science Foundation of China under Grant No. 11275049. 

\section*{Reference}

\end{document}